\numberwithin{equation}{section}
\newtheorem{theorem}{Theorem}
\newtheorem{lemma}{Lemma}
\newtheorem{proposition}{Proposition}
\theoremstyle{definition}
\newtheorem{definition}{Definition}
\newtheorem{remark}{Remark}
\newcommand{\I}{\mathbb{I}}
\newcommand{\R}{\mathbb{R}}
\newcommand{\abs}[1]{\ensuremath{|{#1}|}}
\newcommand{\E}[1]{\ensuremath{\mathbb{E} \left[ {#1} \right]}}
\newcommand{\Ep}[1]{\ensuremath{\mathbb{E} \left( {#1} \right)}}
\newcommand{\V}{{\cal V}}
\newcommand{\EE}{\mathbb{E}}
\newcommand{\PP}{\mathbb{P}}
\begin{document}


\begin{center}
	\large \bf  Variable selection in functional data classification: a maxima hunting proposal\normalsize
\end{center}
\normalsize

\begin{center}
	Jos\'e R. Berrendero\footnote[1]{joser.berrendero@uam.es}, Antonio Cuevas\footnote[2]{antonio.cuevas@uam.es}, Jos\'e L. Torrecilla\footnote[3]{joseluis.torrecilla@uam.es} \\
	Departamento de Matem\'aticas\\
	Universidad Aut\'onoma de Madrid, Spain
\end{center}


\begin{abstract}
\footnotesize Variable selection is considered in the setting of supervised binary classification with functional data $\{X(t),\ t\in[0,1]\}$. By ``variable selection'' we mean  any dimension-reduction method which leads to replace the whole trajectory 
 $\{X(t),\ t\in[0,1]\}$, with a low-dimensional vector $(X(t_1),\ldots,X(t_d))$ still keeping a similar classification error.  
Our proposal for variable selection is based on the idea of selecting the local maxima $(t_1,\ldots,t_d)$ of the function ${\mathcal V}_X^2(t)={\mathcal V}^2(X(t),Y)$, where ${\mathcal V}$ denotes the  ``distance covariance'' association measure for random variables due to \citet{sze07}. This method provides a simple natural way to deal with the  relevance vs. redundancy trade-off which typically appears in variable selection. 
This paper includes 

(a) Some theoretical motivation: a result of consistent estimation for the maxima of ${\mathcal V}_X^2$ is shown. We also show different models for the underlying process $X(t)$ under which the relevant information is concentrated on the maxima of ${\mathcal V}_X^2$.  

(b) An extensive empirical study, including about 400 simulated models and real data examples, aimed at comparing our variable selection method with other standard proposals for dimension reduction.
\end{abstract}

\paragraph{Keywords:}
distance correlation, functional data analysis, supervised classification, variable selection.


\def\thefigure{\arabic{figure}}
\def\thetable{\arabic{table}}

\fontsize{10.95}{14pt plus.8pt minus .6pt}\selectfont

\section{Introduction}\label{introd}

When dealing with functional data, the use of dimension reduction techniques arises as a most natural idea. Some of these techniques are based upon the use of general (linear) finite dimensional projections.  This is the case of functional principal component analysis (FPCA), see \citet{liy13}, although the so-called functional partial least squares (PLS) methodology is in general preferable when a response variable is involved; see \citet{del12} for a recent reference. Other common dimension reduction methods in the functional setting include sliced inverse regression (\citet{hsi09,jia14}) and additive models (\citet{zha13}). Also, the methods based on random projections could offer an interesting alternative.  See, e.g., \citet{cue14} for a short overview of dimension-reduction techniques together with additional references. 

\medskip

\it Some comments on the literature\rm. Our proposal here is concerned  with a different, more radical, approach to dimension reduction, given by the so-called \bf variable selection methods\rm. The aim of variable selection, when applied to functional data, is to replace every infinite dimensional observation $\{x(t),\ t\in[0,1]\}$, with a finite dimensional vector $(x(t_1),\ldots,x(t_d))$. 
The selection of the ``variables'' $t_1,\ldots,t_d$ should be a consequence of a trade-off between two mutually conflicting goals: representativeness and parsimony. In other words, we want to retain as much information as possible (thus selecting relevant variables) employing a small number of variables (thus avoiding redundancy). 

It is clear that variable selection has, at least, an advantage when compared with other dimension reduction methods (PCA, PLS...) based on general projections: the output of any variable selection method is always directly interpretable in terms of the original variables, provided that the required number $d$ of selected variables is not too large. As a matter of fact, variable selection is sometimes the main target itself in many cases where the focus is on model simplification.

We are especially interested in the ``intrinsic'' approaches to variable selection, in the sense that the final output should depend only on the data, not on any assumption on the underlying model (although the result should be interpretable in terms of the model).  There is a vast literature on these topics published by researchers in machine learning or by mathematical statisticians. The approaches and the terminology used in these two communities are not always alike. Thus, in machine learning, variable selection is often referred to as \it feature selection\rm. Also, the methods we have called ``intrinsic'' are often denoted as ``filter methods'' in machine learning.  
It is very common as well (especially in the setting of regression models) to use the terms ``sparse'' or ``sparsity'' to describe situations in which variable selection is the first natural aim; see e.g., \cite{ger10} and \cite{ros13}. It has been also argued in \cite{kne11} that the standard sparsity models are sometimes too restrictive so that it is advisable to combine them with other dimension reduction techniques. The ``relevant'' variables in a functional model are sometimes called ``impact points'' \citep{mck10} or ``most predictive design points'' \citep{fer10}. Also, the term ``choice of components'' has been  used by \cite{del12a}
as a synonym of variable selection. 

Let us finally mention, with no attempt of exhaustiveness in mind, that the recent literature in functional variable selection includes a version of the  classical lasso procedure \citep{zha14}, a study of consistency in the variable selection setup \citep{com12} and the use of inverse regression ideas in variable selection \citep{jialiu14}.
The monograph \cite{guy06} contains a complete survey on feature extraction (including selection) from the point of view of machine learning.  The overview paper by \cite{fan10} has a more statistical orientation.


\medskip

\it The functional classification problem\rm. In what follows we will focus on variable selection for the problem of supervised binary classification, with functional data.  While the statement and basic ideas behind the supervised classification (or discrimination) problem are widely known (see, e.g., \citet{dev96}), we need to briefly recall them for the sake of clarity and for notation purposes. Suppose  that an explanatory random variable $X$, taking values in a \it feature space\/ \rm ${\mathcal F}$, can be observed in the individuals of two populations $P_0$ and $P_1$. Let $Y$ denote a binary random variable, with values in $\{0,1\}$, indicating the membership to $P_0$ or $P_1$. 
On the basis of a data set ${\mathcal D}_n=((X_1,Y_1),\ldots,(X_n,Y_n))$ of $n$ independent observations drawn from $(X,Y)$, the supervised classification  problem aims at predicting the membership class $Y$ of a new observation for which only the variable $X$ is known.

A \it classifier\/ \rm or \it classification rule\/ \rm is just a measurable function
$g:{\mathcal F}\rightarrow \{0,1\}$. It is natural to assess the performance of a classifier by the corresponding \it classification error\/ \rm $L={\mathbb P}(g(X)\neq Y)$. It is well-known that the
classification error $L={\mathbb P}(g(X)\neq Y)$ is minimized by the so-called \it Bayes classifier\rm , $g^*(x)={\mathbb I}_{\{\eta(x)>1/2\}}$,
where $\eta(x)={\mathbb E}(Y|X=x)={\mathbb P}(Y=1|X=x)$. 
Since $g^*$ is in general unknown, it must be approximated, in different ways, by data-driven classifiers.

In our functional setting the feature space will be (unless otherwise stated)  ${\mathcal F}={\mathcal C}[0,1]$, the space of real continuous functions defined on $[0,1]$, endowed with the usual supremum norm.  Thus, our data will be of type $(X_1,Y_1),\ldots, (X_n,Y_n)$, where the $X_i$ are iid trajectories in ${\mathcal C}[0,1]$ drawn from a stochastic process $X=X(t)=X(\omega,t)$. When no confusion is possible, we will denote the whole process by $X$. When convenient, $X(t)$ will be denoted $X_t$.

Several functional classifiers have been considered in the literature (see, e.g., \citet{bai11b} for a survey). Among them, maybe the simplest one is the so-called $k$-nearest neighbors  rule ($k$-NN).  Additionally, we will also consider, as a simple standard choice, the classical linear Fisher's classifier (henceforth LDA), applied to the selected variables. 

\medskip

\it The purpose and contents of this paper\rm.

(a) In Section \ref{max-hunting} we propose a ``maxima hunting''
(MH)  method for variable selection. It is essentially based on the idea of selecting the local maxima $(t_1,\ldots,t_d)$ of the function ${\mathcal V}^2(t)={\mathcal V}^2(X(t),Y)$, where ${\mathcal V}^2$  denotes the 
``distance covariance''  association measure for random variables due to \citet{sze07}. An alternative version of the MH procedure can be obtained by replacing ${\mathcal V}^2(t)$ by the ``distance correlation'' ${\mathcal R}^2(t)$. 
See Section \ref{aux} for a short review of the definitions and properties of ${\mathcal V}^2$ and ${\mathcal R}^2$. 

  Some useful simplified versions for ${\mathcal V}^2$ are obtained in Th. \ref{expresiones} of Section \ref{max-hunting}, for the particular case where $Y$ is a binary variable. A result of consistent estimation (Th. \ref{th:consistency}) for the maxima of ${\mathcal V}^2$ is also proved in that section. 

(b) In Section \ref{motiv} we give several models (identified in terms of the conditional distributions $X(t)|Y=j$) in which the optimal classification rule depends only on a finite number of variables. We also show that in some of these models the variables to be selected coincide with the maxima of ${\mathcal V}^2$. These results provide a theoretical basis for the techniques of variable selection in functional classification models. Usually these techniques are considered from an exclusively algorithmic or computational point of view. It is therefore of some interest to motivate them in ``population terms'', by identifying some specific models where these techniques have full sense.  As pointed out by \cite{bia14}, \it ``Curiously, despite a huge research activity in this
area, few attempts have been made to connect the rich theory of stochastic
processes with functional data analysis''\rm. So the present paper can be seen as a contribution to partially fill this gap.

(c) An extensive simulation study, comparing our variable selection methods with other dimension reduction procedures (as well as with the ``baseline option'' of doing no variable selection at all) is included in Section \ref{sim}. Three real data examples are discussed in Section \ref{real}. Section \ref{conclusiones} includes some final conclusions as well as a ranking of all considered methods.

All the proofs are included in the Appendix.

\section{An auxiliary tool: the distance covariance}\label{aux}

The problem of finding appropriate association measures  between random variables (beyond the standard linear correlation coefficient) has received increasing attention in recent years; see for instance \citet{hal11}. We will use here the association measure proposed by \citet{sze07}, see also \citet{sze09}. It is called \it distance covariance\/ \rm (or \it distance correlation\/ \rm in the standardized version). It has a number of valuable properties: first, it can be used to define the association between two random variables $X$ and $Y$ of arbitrary (possibly different) dimensions; second, it characterizes independence in the sense that the distance covariance between $X$ and $Y$ is zero if and only if $X$ and $Y$ are independent; third, the distance correlation can be easily estimated in a natural plug-in way, with no need of smoothing or discretization.

\begin{definition}\label{def:dcov}
Given two random variables $X$ and $Y$ taking values in ${\mathbb R}^p$ and ${\mathbb R}^q$, respectively, let $\varphi_{X,Y}$, $\varphi_{X}$, $\varphi_{Y}$   be the characteristic functions
of $(X,Y)$, $X$ and $Y$, respectively. Assume that the components of $X$ and $Y$ have finite first-order moments. The distance covariance between $X$ and $Y$, is the non-negative number ${\cal V}(X,Y)$ defined by \begin{equation} \label{dcov}
{\cal V}^2(X,Y) = \int_ {\mathbb{R}^{p+q}} \mid \varphi_{X,Y}(u,v) -\varphi_X(u) \varphi_Y(v)\mid^2 w(u,v) du dv,
\end{equation}
with $w(u,v)= (c_p c_q \abs{u}_p^{1+p} \abs{v}_q^{1+q} )^{-1} $, where $c_d=\frac{\pi^{(1+d)/2}}{\Gamma((1+d)/2)}$ is half the surface area of the unit sphere in ${\mathbb R}^{d+1}$ and $|\cdot|_d$ stands for the Euclidean norm in ${\mathbb R}^d$. 
Finally, denoting ${\cal V}^2(X)={\cal V}^2(X,X)$, the (square) distance correlation is
defined by ${\cal R}^2(X,Y)=\frac{{\cal V}^2(X,Y)}{\sqrt{{\cal V}^2(X){\cal V}^2(Y)}}$ if ${\cal V}^2(X){\cal V}^2(Y)>0$, ${\cal R}^2(X,Y)$ $=0$ otherwise. 
\end{definition}

Note that these definitions make sense even if $X$ and $Y$ have different dimensions (i.e., $p\neq q$). 
 In addition, the association measure ${\cal V}^2(X,Y)$ can be consistently estimated through a relatively simple average of products calculated in terms of the mutual pairwise distances $|X_i-X_j|_p$ and $|Y_i-Y_j|_q$ between the sample values $X_i$ and the $Y_j$; see \citet[expression (2.8)]{sze09}. 
See also \cite{li12} for a different use of the correlation distance in variable selection. 

\section{Variable selection based on maxima hunting}\label{max-hunting}

Our proposal is based on a direct use of the distance covariance association measure. We just suggest to select the values of $t$ corresponding to local maxima of the distance-covariance function ${\cal V}^2(X_t,Y)$ or, alternatively, of the distance correlation function ${\cal R}^2(X_t,Y)$. This method has a sound intuitive basis as it provides a simple natural way to deal with the relevance vs. redundancy trade-off: the selected values must carry a large amount of information on $Y$, which takes into account the \it relevance \rm of the selected variables. In addition, the fact of considering local maxima automatically takes care of the \it redundancy   \rm problem, since the  highly relevant points close to the local maxima are automatically excluded from consideration.   This intuition is empirically confirmed by the results of Section \ref{sim}, where the practical performance of the maxima-hunting method is quite satisfactory. Figure \ref{fig:maxima} shows how the   fun
 ction ${\cal V}^2(X_t,Y)$ looks like in two different examples.

\begin{figure}[h!]\begin{center}
		
\includegraphics[scale=0.40]{./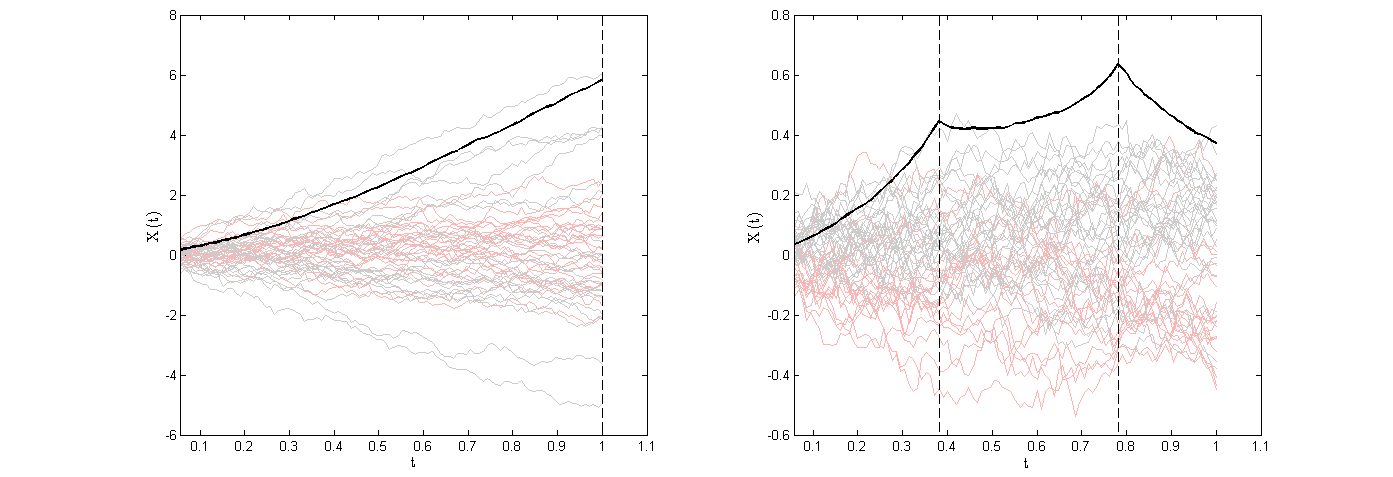}\par
\caption{\footnotesize Left: 50 trajectories of model in Proposition \ref{BvsBST}. Right: Logistic model L11 (explained in Subsection \ref{estruct}) with 50 Ornstein–Uhlenbeck trajectories. ${\cal V}^2(X_t,Y)$ (scaled) is in black and the relevant variables are marked by vertical dashed lines .}\label{fig:maxima}
	\end{center}
\end{figure}

The extreme flexibility of these association measures allows us to consider the case of
a multivariate response $Y$. So there is no conceptual restriction to apply the same ideas for multiple
classification or even to a regression problem. However, we will limit ourselves here to the important 
problem of binary classification. In this case we can derive simplified expressions for ${\cal V}^2(X,Y)$ which are particularly convenient in order to get empirical approximations. This is next shown. 
  
For the sake of generality, the results of this section will be obtained for the $d$-variate case, although in the rest of the paper we will use them just for $d$=1. Thus, throughout this subsection, $d$ will denote a natural number and $t$ will stand for  a vector $t=(t_1,\ldots, t_d)$ $\in [0,1]^d$. Also, for a given process $X$, we abbreviate $X(t)=(X(t_1),\ldots,X(t_d))$ by $X_t$ and $Z'$ will denote
an independent copy of a random variable $Z$. We write  $u^\top$ and $|u|_d$ to denote the transposed and the Euclidean norm of a vector $u\in\mathbb{R}^d$. Let $\eta(x)=\PP(Y=1|X=x)$  so that $Y|X \sim \mbox{Binomial}(1,\eta(X))$ where the symbol $\sim$ stands for ``is distributed as''. Observe that
$p=\PP(Y=1)=\mathbb{E}(\PP(Y=1|X))=\mathbb{E}(\eta(X))$.

Our variable selection methodology will heavily depend on the function $\V^2(X_t,Y)$ giving the distance covariance dependence measure between the marginal vector $X(t)=X_t$, for $t\in [0,1]^d$ and $d\in\mathbb{N}$, and the class variable $Y$. The following theorem gives three alternative expressions for this function. The third one will be particularly useful in what follows.

\begin{theorem}\label{expresiones} In the setting of the functional classification problem above stated, the function $\V^2(X_t,Y)$ defined in (\ref{dcov})
 can be alternatively calculated with the following expressions,
\begin{equation} \label{e1}
\hspace{0.25cm}(a) \hspace{1.5cm} {\cal V}^2(X_t,Y)=\frac{2}{c_d} \int_{\mathbb{R}^d} \frac{\abs{\zeta(u,t)}^2}{|u|_d^{d+1}}du,\hspace{3cm}
\end{equation}
where $\zeta(u,t)=\E{\left( \eta(X)-p\right)e^{iu^\top X_t}}$ and $c_d$ is given in Definition \ref{def:dcov}.
{\begin{align}\label{e2}
(b) \hspace{1.5cm} {\cal V}^2(X_t,Y)=& -2\E{(\eta(X)-p)(\eta(X')-p)|X_t-X'_t|_d}\nonumber \\
=&-2\E{(Y-p)(Y'-p)|X_t-X'_t|_d},
\end{align}}
where $(X^\prime, Y^\prime)$ denotes an independent copy of $(X,Y)$, respectively.
\begin{equation} \label{e3}
\hspace{0.5cm}(c) \hspace{1.5cm} {\cal V}^2(X_t,Y)=4p^2(1-p)^2 \left[ I_{01}(t) - \frac{I_{00}(t)+I_{11}(t)}{2}\right],
\end{equation}
where $I_{i j}(t)=\Ep{|X_t - X'_t|_d\, |\, Y=i,Y'=j}$.

\end{theorem}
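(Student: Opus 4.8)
The plan is to start from Definition \ref{def:dcov} and exploit throughout the binary structure of $Y$. For part (a), I would first use $e^{ivY}=(1-Y)+Ye^{iv}$ (valid since $Y\in\{0,1\}$) together with $\varphi_Y(v)=(1-p)+pe^{iv}$. A short calculation, using $\E{(Y-p)e^{iu^\top X_t}}=\E{(\eta(X)-p)e^{iu^\top X_t}}=\zeta(u,t)$ (which follows by conditioning on $X$), should collapse the characteristic-function difference to the clean product
$$\varphi_{X_t,Y}(u,v)-\varphi_{X_t}(u)\varphi_Y(v)=\zeta(u,t)\,(e^{iv}-1).$$
Taking squared modulus gives $|\zeta(u,t)|^2\,|e^{iv}-1|^2$, so the integration in $v$ separates out. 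With the weight of Definition \ref{def:dcov} and the value $c_1=\pi$, the remaining one-dimensional integral is $\int_{\R}\frac{|e^{iv}-1|^2}{v^2}\,dv=\int_{\R}\frac{2(1-\cos v)}{v^2}\,dv=2\pi$, a standard evaluation; this produces the factor $2/c_d$ of (\ref{e1}).

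For part (b), the key analytic input is the identity $\int_{\R^d}\frac{1-\cos(u^\top z)}{|u|_d^{d+1}}\,du=c_d\,|z|_d$ of \citet{sze07}. Writing $|\zeta(u,t)|^2=\zeta(u,t)\overline{\zeta(u,t)}$ and introducing an independent copy $(X',Y')$, I would express it as $\E{(\eta(X)-p)(\eta(X')-p)\,e^{iu^\top(X_t-X'_t)}}$; by the symmetry $X\leftrightarrow X'$ the imaginary part cancels, so the exponential may be replaced by $\cos(u^\top(X_t-X'_t))$. The main obstacle is the interchange of integration and expectation, since $|u|_d^{-(d+1)}$ is not integrable near $u=0$ and Fubini does not apply directly. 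The remedy is to subtract the constant $1$ inside the cosine, which is legitimate because $\E{(\eta(X)-p)(\eta(X')-p)}=0$ (independence together with $\E{\eta(X)-p}=0$); then $\cos(u^\top z)-1=O(|u|^2)$ renders the integrand integrable, Fubini applies, and the cited identity evaluates the inner integral as $-c_d\,|X_t-X'_t|_d$, giving the first line of (\ref{e2}). The second equality follows by conditioning on $(X,X')$: since $|X_t-X'_t|_d$ is a function of $(X,X')$ and $Y,Y'$ are conditionally independent given $(X,X')$ with conditional means $\eta(X),\eta(X')$, one has $\E{(Y-p)(Y'-p)\mid X,X'}=(\eta(X)-p)(\eta(X')-p)$.

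Finally, part (c) is a direct consequence of the second form in (\ref{e2}). I would expand the expectation by summing over the four values of $(Y,Y')\in\{0,1\}^2$, factoring $\Pp{Y=i,Y'=j}=\Pp{Y=i}\Pp{Y'=j}$ with $\Pp{Y=1}=p$. Each coefficient $(i-p)(j-p)\Pp{Y=i,Y'=j}$ equals $p^2(1-p)^2$ on the diagonal $i=j$ and $-p^2(1-p)^2$ off it; using $I_{01}(t)=I_{10}(t)$ from the iid assumption, the four terms collect into $p^2(1-p)^2\,[I_{00}(t)+I_{11}(t)-2I_{01}(t)]$. Multiplying by $-2$ and rearranging gives $4p^2(1-p)^2[I_{01}(t)-\frac{I_{00}(t)+I_{11}(t)}{2}]$, which is (\ref{e3}).
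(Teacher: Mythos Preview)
Your proposal is correct and follows essentially the same route as the paper's proof: the same factorization $(e^{iv}-1)\zeta(u,t)$ in part (a), the same use of the \citet{sze07} identity after the ``subtract $1$'' trick in part (b), and the same conditioning on $(Y,Y')$ in part (c). The only cosmetic difference is that in (a) you obtain the factorization via the explicit binary identity $e^{ivY}=(1-Y)+Ye^{iv}$, whereas the paper conditions on $X$ to compute $\E{e^{ivY}-\varphi_Y(v)\mid X}=(e^{iv}-1)(\eta(X)-p)$; both lead immediately to the same product. Your explicit remark on why Fubini requires the subtraction of $1$ (non-integrability of $|u|_d^{-(d+1)}$ near the origin) is a point the paper leaves implicit.
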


\smallskip

In a training sample $\{(X_i,Y_i),\ i=1,\ldots,n\}$ denote by $X^{(0)}_1,\ldots, X^{(0)}_{n_0}$ and $X^{(1)}_1,\ldots, X^{(1)}_{n_1}$ the $X$-observations corresponding to values $Y_i=0$ and $Y_i=1$, respectively.  In this section, we use these data to obtain an  estimator of ${\cal V}^2(X_t,Y)$, which is uniformly consistent in $t$. As a consequence, we can estimate the local maxima of ${\cal V}^2(X_t,Y)$: using part (c) of Theorem \ref{expresiones}, a natural estimator for ${\cal V}^2(X_t,Y)$ is
\[
{\cal V}_n^2(X_t,Y)=4\hat p^2(1-\hat p)^2 \left[ \hat I_{01}(t) - \frac{\hat I_{00}(t)+\hat I_{11}(t)}{2}\right],
\]
where $\hat p=n_1 /(n_0 + n_1)$,  $\hat I_{rr}(t) = \frac{2}{n_r(n_r-1)} \sum_{i<j} |X^{(r)}_i(t) - X^{(r)}_j(t)|_d,$
 for $r=0,1$, and $\hat I_{01}(t) = \frac{1}{n_0n_1} \sum_{i=1}^{n_0} \sum_{j=1}^{n_1} |X^{(0)}_i(t) - X^{(1)}_j(t)|_d.$
The uniform strong consistency of ${\cal V}_n^2(X_t,Y)$ is established in Theorem \ref{th:consistency} below. 

\begin{theorem}
\label{th:consistency}
Let $X=X_t$, with $t\in[0,1]^d$, be a process with continuous trajectories almost surely such that $\mathbb{E}( \|X\|_\infty\log^+\|X\|_\infty )< \infty$. Then, ${\cal V}_n^2(X_t,Y)$ is continuous in $t$ and
\[
\underset{t\in [0,1]^d}{\sup}|{\cal V}_n^2(X_t,Y)-{\cal V}^2(X_t,Y)|
\to 0 \ \ \mbox{a.s.,\ as } n\to\infty.
\]
Hence, if we assume that ${\cal V}^2(X_t,Y)$ has exactly $m$ local maxima at $t_1,\cdots,t_m$, then ${\cal V}_n^2(X_t,Y)$ has also eventually at least
$m$ maxima at $t_{1n},\cdots,t_{mn}$ with $t_{jn}\to t_j$, as $n\to\infty$, a.s., for $j=1,\ldots,m$.
\end{theorem}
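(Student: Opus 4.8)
The plan is to deduce everything from the uniform strong consistency of the three empirical averages $\hat I_{00}(t)$, $\hat I_{11}(t)$ and $\hat I_{01}(t)$ towards their population counterparts $I_{00}(t)$, $I_{11}(t)$, $I_{01}(t)$, together with the elementary fact that $\hat p\to p$ a.s. (by the strong law for the Bernoulli variables $Y_i$), which also guarantees $n_0,n_1\to\infty$ a.s. whenever $0<p<1$. Once $\sup_t|\hat I_{ij}(t)-I_{ij}(t)|\to 0$ a.s. is available for each pair $ij$, the conclusion for $\mathcal V_n^2$ follows by writing it, via part (c) of Theorem \ref{expresiones}, as a fixed polynomial in $\hat p$ and the $\hat I_{ij}$: each $I_{ij}$ is bounded and continuous in $t$ (by dominated convergence, noting $\mathbb{E}\|X\|_\infty<\infty$ since $L\log L$ implies $L^1$), and sums and products of uniformly convergent sequences with bounded continuous limits converge uniformly, giving $\sup_t|\mathcal V_n^2(X_t,Y)-\mathcal V^2(X_t,Y)|\to 0$ a.s. Continuity of $\mathcal V_n^2$ in $t$ is immediate, since each $|X_i^{(r)}(t)-X_j^{(s)}(t)|_d$ is continuous and the estimator is a finite sum of such terms; consequently $\mathcal V^2$, as the uniform a.s. limit, is continuous as well.

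The heart of the matter is the uniform (in $t$) strong law for each $\hat I_{ij}$, which I would establish in two moves. First, pointwise a.s. convergence $\hat I_{ij}(t)\to I_{ij}(t)$ for every fixed $t$, which is Hoeffding's strong law for (one- and two-sample) U-statistics with the integrable kernel $|X(t)-X'(t)|_d$; integrability holds because $|X(t)-X'(t)|_d\le\sqrt d\,(\|X\|_\infty+\|X'\|_\infty)$. Second, asymptotic equicontinuity of the random maps $t\mapsto\hat I_{ij}(t)$: for $t,s\in[0,1]^d$ the reverse triangle inequality gives $\bigl|\,|x(t)-x'(t)|_d-|x(s)-x'(s)|_d\,\bigr|\le\sqrt d\,(\omega_x(|t-s|)+\omega_{x'}(|t-s|))$, with $\omega_x$ the modulus of continuity of the trajectory $x$ on $[0,1]$, so that $\sup_{|t-s|\le\delta}|\hat I_{ij}(t)-\hat I_{ij}(s)|$ is bounded by a sample average of the $\omega_{X_i}(\delta)$. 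By the strong law this average tends a.s. to a constant multiple of $\mathbb{E}\,\omega_X(\delta)$, and $\mathbb{E}\,\omega_X(\delta)\to 0$ as $\delta\to 0$ by dominated convergence (since $\omega_X(\delta)\le 2\|X\|_\infty$ and $\omega_X(\delta)\to 0$ a.s. by uniform continuity of the continuous trajectories on the compact $[0,1]$). Combining pointwise convergence on a countable dense subset of $[0,1]^d$ with this equicontinuity yields uniform convergence, by a routine Arzel\`a--Ascoli / $\varepsilon$-net argument.

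The main obstacle is precisely this passage from pointwise to uniform convergence, and it is here that the moment hypothesis enters. The equicontinuity route above is self-contained and in fact goes through under the weaker assumption $\mathbb{E}\|X\|_\infty<\infty$; the stated condition $\mathbb{E}(\|X\|_\infty\log^+\|X\|_\infty)<\infty$ is exactly what one needs for the alternative, more streamlined argument in which one regards the $\hat I_{rr}$ and $\hat I_{01}$ as one- and two-sample U-statistics whose kernel $(x,x')\mapsto|x(\cdot)-x'(\cdot)|_d$ takes values in the separable Banach space $\mathcal C([0,1]^d)$, with envelope $\sqrt d\,(\|x\|_\infty+\|x'\|_\infty)$, and invokes a strong law of large numbers for Banach-valued U-statistics: such results typically demand an $L\log L$ control of the envelope rather than mere integrability, which accounts for the precise form of the hypothesis. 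Either way, the delicate point is controlling the supremum over the uncountable index set $[0,1]^d$ uniformly in $n$.

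For the second assertion I would use the standard stability of isolated local maxima under uniform perturbation. Around each local maximum $t_j$ choose pairwise disjoint closed balls $\bar B(t_j,\rho)\subset[0,1]^d$ small enough to contain no other local maximum; because $t_j$ is a strict local maximum, $M_j:=\max_{\partial B(t_j,\rho)}\mathcal V^2<\mathcal V^2(X_{t_j},Y)=:V_j$. Fixing $\varepsilon=(V_j-M_j)/3$ and using $\sup_t|\mathcal V_n^2-\mathcal V^2|<\varepsilon$ eventually, one gets $\mathcal V_n^2(X_{t_j},Y)>V_j-\varepsilon>M_j+\varepsilon>\max_{\partial B(t_j,\rho)}\mathcal V_n^2$; since $\mathcal V_n^2$ is continuous it attains its maximum over the compact ball $\bar B(t_j,\rho)$ at an interior point $t_{jn}$, which is therefore a local maximum of $\mathcal V_n^2$. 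As the balls are disjoint, the $t_{1n},\dots,t_{mn}$ are at least $m$ distinct local maxima, and letting $\rho\to 0$ along a sequence $\rho_k\to 0$ (a diagonal argument) forces $t_{jn}\to t_j$ a.s. The one point that must be granted here is that the local maxima are strict, i.e. isolated with a strict drop on a surrounding sphere; this is implicit in the phrase ``exactly $m$ local maxima'' and is what makes the perturbation stable.
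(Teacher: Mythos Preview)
Your proposal is correct and follows essentially the same route as the paper: reduce via Theorem~\ref{expresiones}(c) to the uniform strong consistency of $\hat I_{00}$, $\hat I_{11}$, $\hat I_{01}$, prove each by pointwise SLLN for U-statistics plus an equicontinuity/$\varepsilon$-net argument on the compact $[0,1]^d$, and then read off stability of local maxima from uniform convergence. Your modulus-of-continuity bound is a clean variant of the paper's Dini-theorem step $\sup_t\mathbb{E}M_\delta(t)\to 0$; your treatment of the local-maxima part is in fact more explicit than the paper's one-line appeal to ``uniform convergence and compactness''.

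One small correction on the role of the moment hypothesis. You suggest that the equicontinuity route already works under $\mathbb{E}\|X\|_\infty<\infty$ and that $L\log L$ is only needed for a Banach-valued U-statistic SLLN. In the paper the $L\log L$ condition enters precisely in the \emph{pointwise} step for the two-sample piece: the proof of Lemma~\ref{lemma:twosampleUstatistics} invokes Sen's (1977) SLLN for generalized U-statistics to get $\hat I_{01}(t)\to I_{01}(t)$ at each fixed $t$, and that result is stated under the $L\log L$ moment on the kernel. So within the very equicontinuity scheme you outline, the stronger moment is already used in your ``First'' move for $\hat I_{01}$, not only in the alternative Banach-space argument.
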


\section{Some theoretical, model-oriented motivation for variable selection and maxima-hunting}\label{motiv}
The variable selection methods we are considering here for the binary functional classification problem are aimed at selecting \it a finite number of variables\rm.  One might think that this is a ``too coarse'' approach for functional data. Nevertheless, we provide here some  theoretical motivation by showing that, in some relevant cases, variable selection is ``the best we can do'' in the sense that, in some relevant models, the Bayes rule (i.e., the optimal classifier) has an expression of type $g^*(X)=h(X(t_1),\cdots,X(t_d))$, so that it  depends only on a finite (typically small) number of variables. In fact, in many situations, a proper variable selection leads to an improvement in efficiency (with respect to the baseline option of using the full sample curves), due to the gains associated with a smaller noise level.

 The distribution of
$X(t)|Y=i$, will be denoted by $\mu_i$ for $i=0,1$. In all the examples below the considered processes are Gaussian, i.e., for all $t_1,\ldots,t_m\in[0,1]$, with $m\in{\mathbb N}$, the finite-dimensional marginal $(X(t_1),\ldots,X(t_m))|Y=i$ has a normal distribution in ${\mathbb R}^m$ for $i=0,1$. Many considered models have non-smooth, Brownian-like trajectories. These models play a very relevant role in statistical applications, in particular to the classification problem; see, e.g., \citet{lin09}.

Let us now recall some basic notions and results to be used throughout (see, e.g., \citet[ch. 4]{ath06}, for details): $\mu_0$ is said to be \it absolutely continuous with respect to $\mu_1$\/ \rm (which is denoted by $\mu_0 \ll\mu_1$) if and only if $\mu_1(A)=0$ entails $\mu_0(A)=0$, $A$ being a Borel set in ${\mathcal C}[0,1]$. Two probability measures $\mu_0$ and $\mu_1$ are said to be \it equivalent\/ \rm if  $\mu_0 \ll\mu_1$ and $\mu_1 \ll\mu_0$; they are \it mutually singular\/ \rm when there exists a Borelian set $A$ such that $\mu_1(A)=0$ and $\mu_0(A)=1$.
The so-called \it Hajek-Feldman dichotomy \rm (see \citet{fel58}) states that  if $\mu_0$ and $\mu_1$ are Gaussian, then they are either equivalent or mutually singular.
The \it Radon-Nikodym Theorem\/ \rm establishes that $\mu_1\ll \mu_0$ if and only if there exists a measurable function $f$ such that
$\mu_1(A)=\int_Af d\mu_0$ for all Borel set $A$. The function $f$ (which is unique $\mu_0$-almost surely) is called \it Radon-Nikodym derivative of $\mu_1$ which respect to $\mu_0$\rm. It is usually represented by $f=\frac{d\mu_1}{d\mu_0}$.

Finally, in order to obtain the results in this section we need to recall (see \citet[Th. 1]{bai11a}) that
\begin{equation}
\label{eqBayesRN}
\eta(x)=\left[\frac{1-p}{p}\frac{d\mu_0}{d\mu_1}(x)+1 \right]^{-1},\ \ \mbox{for}\ x\in {\mathcal S},
\end{equation}
where ${\mathcal S}$ is the common support of $\mu_0$ and $\mu_1$, and $p=\PP(Y=1)$. This equation
provides the expression for the optimal rule $g^*(x)=\mathbb{I}_{\{\eta(x)>1/2\}}$ in some important cases where the
Radon-Nikodym derivative is explicitly known.

\medskip

\it Some examples\rm. \noindent  Two non-trivial situations in which the Radon-Nikodym derivatives can be explicitly calculated are those problems where $\mu_0$ is the standard Brownian motion $B(t)$, and $\mu_1$ corresponds to $B(t)$ plus a stochastic or a linear trend. In both cases the Bayes rule $g^*$ turns out to depend just on one value of $t$.  To be more precise, it has the form $g^*(X)=h(X(1))$. This is formally stated in the following results. Proofs can be found in the Appendix.

\begin{proposition}\label{BvsBST}
Let us assume that $\mu_0$ is the distribution of a standard Brownian motion $B(t),\ t\in[0,1]$ and $\mu_1$
is the distribution of $B(t)+\theta t$, where $\theta$ is a random variable with distribution $N(0,1)$, independent from $B$. Then, the Bayes rule is given by 
$g^*(x)={\mathbb I}_{\left\{x_1^2 > 4\log\left( \frac{\sqrt{2}(1-p)}{p} \right)\right\}}(x),\ \ \mbox{for all}\ x\in{\mathcal C}[0,1]$.
\end{proposition}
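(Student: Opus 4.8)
The plan is to reduce the proposition to an explicit evaluation of the likelihood ratio between $\mu_0$ and $\mu_1$, which is then fed into~(\ref{eqBayesRN}). Since $g^*(x)=\I_{\{\eta(x)>1/2\}}(x)$, a glance at~(\ref{eqBayesRN}) shows that $\eta(x)>1/2$ is equivalent to $\frac{1-p}{p}\,\frac{d\mu_0}{d\mu_1}(x)<1$, that is, to
\[
\frac{d\mu_0}{d\mu_1}(x)<\frac{p}{1-p}.
\]
Thus the whole task is to compute $\frac{d\mu_0}{d\mu_1}$ (equivalently $\frac{d\mu_1}{d\mu_0}$) in closed form.

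\emph{Conditional likelihood ratio.} First I would freeze $\theta$ and recognise that, conditionally on $\theta$, the law of $B(t)+\theta t$ is that of $B$ translated by the Cameron--Martin element $h(t)=\theta t$, whose derivative $\dot h\equiv\theta$ lies in $L^2[0,1]$. The Cameron--Martin (Girsanov) theorem then gives the conditional density with respect to the Wiener measure $\mu_0$,
\[
\frac{d\mu_\theta}{d\mu_0}(x)=\exp\Bigl(\int_0^1\dot h(s)\,dx(s)-\tfrac12\int_0^1\dot h(s)^2\,ds\Bigr)=\exp\bigl(\theta\,x(1)-\tfrac12\theta^2\bigr),
\]
where I used $\int_0^1\theta\,dx(s)=\theta\,(x(1)-x(0))=\theta\,x(1)$ because $x(0)=0$; note the integrand is deterministic, so no It\^o subtlety arises.

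\emph{Unconditional ratio and conclusion.} Mixing over $\theta\sim N(0,1)$ yields
\[
\frac{d\mu_1}{d\mu_0}(x)=\E{\exp\bigl(\theta\,x(1)-\tfrac12\theta^2\bigr)}=\frac{1}{\sqrt{2}}\,\exp\Bigl(\tfrac14 x_1^2\Bigr),
\]
where $x_1=x(1)$ and the last equality is an elementary Gaussian integral obtained by completing the square. Inverting, $\frac{d\mu_0}{d\mu_1}(x)=\sqrt{2}\,\exp(-x_1^2/4)$; substituting into the displayed inequality, taking logarithms and rearranging gives exactly $x_1^2>4\log\bigl(\sqrt{2}(1-p)/p\bigr)$, which is the asserted form of $g^*$.

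\emph{Main obstacle.} The delicate point is the passage from the conditional to the unconditional density, i.e. justifying $\frac{d\mu_1}{d\mu_0}=\E{d\mu_\theta/d\mu_0}$, which amounts to interchanging the $\theta$-integration with integration against the (infinite-dimensional) Wiener measure. I would make this rigorous by a Fubini argument on the joint law of $(\theta,B)$, observing that $\mu_1$ is the $\theta$-marginal of the shifted process, and I would confirm that the outcome is a genuine Radon--Nikodym derivative by checking that $\int \tfrac{1}{\sqrt{2}}\exp(x(1)^2/4)\,d\mu_0(x)=1$, a one-dimensional Gaussian computation since $x(1)\sim N(0,1)$ under $\mu_0$. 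This also certifies, via the Hajek--Feldman dichotomy, that $\mu_0$ and $\mu_1$ are equivalent rather than mutually singular, so that~(\ref{eqBayesRN}) legitimately applies on the common support.
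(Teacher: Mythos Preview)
Your proof is correct and follows essentially the same route as the paper: both plug the Radon--Nikodym derivative $\frac{d\mu_0}{d\mu_1}(x)=\sqrt{2}\,e^{-x_1^2/4}$ into~(\ref{eqBayesRN}) and solve $\eta(x)>1/2$. The only difference is that the paper simply cites \citet[p.~239]{lip77} for this density, whereas you derive it self-containedly via Cameron--Martin conditioned on $\theta$ followed by a Gaussian mixing integral; your additional care with the Fubini step and the equivalence of $\mu_0,\mu_1$ is a welcome bonus that the paper leaves implicit.
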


As a particular case, when the prior probabilities of both groups are equal, $p=1/2$, we get $g^*(x)=1$ if and only if
$
|x_1| > 2\sqrt{\log\sqrt{2}} \approx 1.77.
$

\begin{proposition}\label{BvsBLT}
Let us assume that $\mu_0$ is the distribution of a standard Brownian motion $B(t),\ t\in[0,1]$ and $\mu_1$
is the distribution of $B(t)+c t$, where $c\neq 0$ is a constant. Then, for $x\in{\mathcal C}[0,1]$ the Bayes rule is given by
$g^*(x)={\mathbb I}_{\left\{x_1 > \frac{c}{2} - \frac{1}{c}\log\left(\frac{p}{1-p}\right)\right\}}(x)$, if $c>0$, and 
$g^*(x)={\mathbb I}_{\left\{x_1 < \frac{c}{2} - \frac{1}{c}\log\left(\frac{p}{1-p}\right)\right\}}(x)$, if $c<0$.

\end{proposition}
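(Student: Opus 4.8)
The plan is to combine the Radon-Nikodym representation of $\eta$ in (\ref{eqBayesRN}) with the classical Cameron-Martin formula for the likelihood ratio between Brownian motion and Brownian motion with a drift. Writing $g^*(x)=\mathbb{I}_{\{\eta(x)>1/2\}}$ and using (\ref{eqBayesRN}), and recalling that $0<p<1$ makes the bracket in (\ref{eqBayesRN}) positive, the condition $\eta(x)>1/2$ is equivalent to $\frac{1-p}{p}\frac{d\mu_0}{d\mu_1}(x)<1$, that is,
\[
\frac{d\mu_0}{d\mu_1}(x)<\frac{p}{1-p}.
\]
So the whole argument reduces to an explicit computation of the Radon-Nikodym derivative $d\mu_0/d\mu_1$ followed by solving a scalar inequality in $x_1=x(1)$.

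First I would identify $\mu_0$ and $\mu_1$ as Gaussian measures on $\mathcal{C}[0,1]$ that share the Brownian covariance $\min(s,t)$ and differ only in their mean functions, $m_0(t)=0$ and $m_1(t)=ct$. Since $m_1(t)=\int_0^t c\,ds$ with constant ``velocity'' $c\in L^2[0,1]$, the shift $m_1$ lies in the Cameron-Martin space of Brownian motion, so $\mu_0$ and $\mu_1$ are equivalent (in agreement with the Hajek-Feldman dichotomy) and the Cameron-Martin theorem applies, yielding
\[
\frac{d\mu_1}{d\mu_0}(x)=\exp\left(c\,x(1)-\frac{c^2}{2}\right).
\]
The only feature special to the linear drift is that the Paley-Wiener integral $\int_0^1 c\,dx(s)$ collapses to $c\,x(1)$, because $x(0)=0$; in particular the sufficient statistic is just $x(1)$. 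Inverting gives $\frac{d\mu_0}{d\mu_1}(x)=\exp\!\left(\frac{c^2}{2}-c\,x(1)\right)$.

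Finally I would substitute this into the inequality above. Taking logarithms, $\eta(x)>1/2$ becomes $\frac{c^2}{2}-c\,x_1<\log\frac{p}{1-p}$, equivalently $-c\,x_1<\log\frac{p}{1-p}-\frac{c^2}{2}$. Dividing by $-c$ produces the threshold $\frac{c}{2}-\frac{1}{c}\log\frac{p}{1-p}$ in both cases, and this matching of the threshold is exactly what makes the two stated forms of $g^*$ line up. The only point requiring care---and the main ``obstacle,'' such as it is---is that dividing by $-c$ preserves the sense of the inequality when $c<0$ but reverses it when $c>0$, which is precisely why the Bayes rule has the form $\mathbb{I}_{\{x_1>\,\cdot\,\}}$ for $c>0$ and $\mathbb{I}_{\{x_1<\,\cdot\,\}}$ for $c<0$. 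Beyond this sign bookkeeping the computation is entirely routine once the Cameron-Martin density has been invoked.
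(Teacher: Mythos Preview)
Your proof is correct and follows essentially the same route as the paper: both combine the representation (\ref{eqBayesRN}) with the Cameron--Martin density for Brownian motion with linear drift to obtain $\eta(x)=\big[\tfrac{1-p}{p}\exp(c^2/2-cx_1)+1\big]^{-1}$, and then solve the scalar inequality $\eta(x)>1/2$. Your treatment is in fact slightly more explicit than the paper's in handling the sign of $c$ when isolating $x_1$.
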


Before presenting our third example we need some additional notation. Let us now define the countable family of \it Haar functions\rm, $
\varphi_{m,k}=\sqrt{2^{m-1}}
\left[ \I_{\left( \frac{2k-2}{2^m},\frac{2k-1}{2^m}\right)} \right.$ $\left. - 
 \I_{\left( \frac{2k-1}{2^m},\frac{2k}{2^m}\right)}\right],
$
for $m,k\in{\mathbb N}$, $1\leq k\leq 2^{m-1}$.
The family $\{\varphi_{m,k}\}$ is known to be an orthonormal basis in $L^2[0,1]$. Moreover, define the ``peak'' functions $\Phi_{m,k}$
by
\begin{equation}
\Phi_{m,k}(t)=\int_0^t\varphi_{m,k}(s)ds.\label{peak}
\end{equation}
We want to use these peak functions to define the trend of the $\mu_1$ distribution in another model of type ``Brownian versus Brownian plus trend''. In this case the Bayes rule depends just on three points.
\begin{proposition}\label{BvsBTri}
Let us assume that $\mu_0$ is the distribution of a standard Brownian motion $B(t),\ t\in[0,1]$ and $\mu_1$
is the distribution of $B(t)+\Phi_{m,k}(t)$, where $\Phi_{m,k}$ is one of the peak functions defined above. Then, for $x\in{\mathcal C}[0,1]$ the regression function $\eta(x)={\mathbb E}(Y|X=x)$ is
\begin{align}
\label{etaphi}
\eta(x)=\left\{
\frac{1-p}{p}\exp \left( \frac{1}{2} - 2^{\frac{m-1}{2}} 
\left[\left( x_{\frac{2k-1}{2^m}}-x_{\frac{2k-2}{2^m}}\right)+
\left( x_{\frac{2k-1}{2^m}}-x_{\frac{2k}{2^m}}\right) \right]   
 \right)+1 \right\}^{-1}
\end{align}
and the Bayes rule
$ g^*(x)={\mathbb I}_{\{\eta(x)>1/2\}}$ fulfils $g^*(x)=1$ if and only if
\begin{align}
\label{etarule}
 \left( x_{\frac{2k-1}{2^m}}-x_{\frac{2k-2}{2^m}}\right) +
\left( x_{\frac{2k-1}{2^m}}-x_{\frac{2k}{2^m}}\right) 
>
\frac{1}{\sqrt{2^{m+1}}} -
\frac{1}{\sqrt{2^{m-1}}} \log\left(\frac{p}{1-p}\right).
\end{align}
\end{proposition}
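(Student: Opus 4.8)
The plan is to reduce everything to the identity (\ref{eqBayesRN}), so that the whole proposition follows once we have an explicit expression for the Radon--Nikodym derivative $d\mu_0/d\mu_1$. Since $\mu_0$ is Wiener measure and $\mu_1$ is the law of $B+\Phi_{m,k}$, the two measures differ by a purely \emph{deterministic} shift of the Brownian path, so the natural tool is the Cameron--Martin (Girsanov) theorem. First I would check that the drift $\Phi_{m,k}$ lies in the Cameron--Martin space: by (\ref{peak}) we have $\Phi_{m,k}(t)=\int_0^t\varphi_{m,k}(s)\,ds$ with $\varphi_{m,k}\in L^2[0,1]$ (indeed bounded), so this holds and, as a byproduct, $\mu_0$ and $\mu_1$ are equivalent---consistent with the Hajek--Feldman dichotomy.

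Next I would apply Cameron--Martin to write $\frac{d\mu_1}{d\mu_0}(x)=\exp\!\big(\int_0^1\varphi_{m,k}(s)\,dx(s)-\tfrac12\int_0^1\varphi_{m,k}(s)^2\,ds\big)$. The quadratic term equals $\tfrac12$ because $\{\varphi_{m,k}\}$ is orthonormal in $L^2[0,1]$. The decisive step is the evaluation of the linear term: since $\varphi_{m,k}$ is the step function equal to $\sqrt{2^{m-1}}$ on $(\frac{2k-2}{2^m},\frac{2k-1}{2^m})$ and to $-\sqrt{2^{m-1}}$ on $(\frac{2k-1}{2^m},\frac{2k}{2^m})$, the Wiener integral collapses to a finite linear combination of \emph{path increments}, namely $2^{(m-1)/2}[(x_{\frac{2k-1}{2^m}}-x_{\frac{2k-2}{2^m}})+(x_{\frac{2k-1}{2^m}}-x_{\frac{2k}{2^m}})]$. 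This is exactly the mechanism that forces the Bayes rule to depend on only the three coordinates $\frac{2k-2}{2^m},\frac{2k-1}{2^m},\frac{2k}{2^m}$.

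From here the remaining steps are algebraic. Inverting gives $\frac{d\mu_0}{d\mu_1}(x)=\exp(\tfrac12-2^{(m-1)/2}[\cdots])$; substituting into (\ref{eqBayesRN}) yields (\ref{etaphi}) directly. For the Bayes rule I would note that $\eta(x)>1/2$ is equivalent to $\frac{1-p}{p}\frac{d\mu_0}{d\mu_1}(x)<1$; taking logarithms, collecting the constant $\tfrac12$, and using $\frac{1}{2\cdot 2^{(m-1)/2}}=1/\sqrt{2^{m+1}}$ together with $2^{-(m-1)/2}=1/\sqrt{2^{m-1}}$ produces (\ref{etarule}) after routine simplification.

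The main obstacle---more a matter of care than of difficulty---is the pathwise interpretation of $\int_0^1\varphi_{m,k}\,dx$. The Cameron--Martin density involves a Wiener integral that is a priori defined only $\mu_0$-almost surely, whereas (\ref{etaphi}) must be a genuine function on $\mathcal{C}[0,1]$ in order for the Bayes rule to be well defined for every trajectory $x$. The resolution is precisely that a step integrand yields a finite sum of increments, a bona fide continuous functional of $x\in\mathcal{C}[0,1]$; one could alternatively justify this by integration by parts. Once this identification is made, no stochastic-integral technicalities remain and the computation is elementary.
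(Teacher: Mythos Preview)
Your argument is correct and follows essentially the same route as the paper: apply the Cameron--Martin formula \eqref{RNderivative} with $F=\Phi_{m,k}$ (so $F'=\varphi_{m,k}$), evaluate the resulting stochastic integral as a finite sum of increments because $\varphi_{m,k}$ is a step function, and substitute into \eqref{eqBayesRN}. Your treatment is in fact more explicit than the paper's---which simply states that (\ref{etaphi}) and (\ref{etarule}) follow ``readily'' from (\ref{eqBayesRN}) and \eqref{RNderivative}---and your remark on the pathwise interpretation of $\int_0^1\varphi_{m,k}\,dx$ is a welcome point of rigor that the paper leaves implicit.
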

Let us recall that, according to Cameron-Martin Theorem (see \citet[p. 24]{mor10}), in order to get the equivalence of $\mu_1$ and $\mu_0$ the trend function is required to belong to the Dirichlet space ${\mathcal D}[0,1]$ of real functions $F$ defined in $[0,1]$ which have a derivative $F^\prime$ in $L^2[0,1]$ such that $F(t)=\int_0^tF^\prime(s)ds$. 
It can be seen (\citet[p. 28]{mor10}) that $\{\Phi_{m,k}\}$ is an orthonormal basis for
${\mathcal D}[0,1]$.
\begin{remark} Analogous calculations can be performed (still obtaining explicit expressions for the Bayes rule of type $g^*(x)=g(x(t_1),\ldots,x(t_d))$), using a rescaled Brownian motion $\sigma B(t)$ or the Brownian Bridge instead of $B(t)$, or a piecewise linear trend instead of these. Likewise, other models could be obtained by linear combinations in the trend functions or by finite mixtures of other simpler models. Many of them have been included in the simulation study of Section \ref{sim}.
\end{remark}
Next, we will provide some theoretical support for the maxima-hunting method, by showing that in some specific useful models the optimal classification rule depends on  the maxima of the distance covariance function ${\cal V}^2(X_t,Y)$, although in some particular examples, other points (closely linked to the maxima) are also relevant.
\begin{proposition}\label{maximo-unico}
Under the models assumed in Propositions \ref{BvsBST} and \ref{BvsBLT}, the corresponding distance covariance functions $\V^2(X_t,Y)$ have both a unique relative maximum at the point $t=1$.
\end{proposition}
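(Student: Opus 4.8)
The plan is to reduce everything to one dimension (here $d=1$) and to exploit part (c) of Theorem \ref{expresiones}, which expresses $\mathcal{V}^2(X_t,Y)$ through the three quantities $I_{ij}(t)=\Ep{|X_t-X'_t|\mid Y=i,Y'=j}$. In both models $X(t)$ is Gaussian at every fixed $t$, so these quantities reduce to expected absolute values of univariate normal variables. Concretely, for the linear-trend model of Proposition \ref{BvsBLT} one has $X(t)\mid Y=0\sim N(0,t)$ and $X(t)\mid Y=1\sim N(ct,t)$, whereas for the stochastic-trend model of Proposition \ref{BvsBST} one has $X(t)\mid Y=0\sim N(0,t)$ and $X(t)\mid Y=1\sim N(0,t+t^2)$ (since $B(t)$ and $\theta t$ are independent and centred). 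The first step is thus to record these conditional laws.

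The second step is to evaluate the $I_{ij}(t)$. Writing $h(\lambda)=\E{|Z+\lambda|}$ with $Z\sim N(0,1)$, and recalling that for $W\sim N(m,s^2)$ one has $\E{|W|}=s\,h(m/s)$, each $I_{ij}(t)$ follows from the law of the normal difference $X_t-X'_t$. In the linear case $X_t-X'_t$ is $N(0,2t)$ under $(0,0)$ and $(1,1)$ and $N(-ct,2t)$ under $(0,1)$, so $I_{00}=I_{11}=\sqrt{2t}\,h(0)$ and $I_{01}=\sqrt{2t}\,h(|c|\sqrt{t/2})$; substituting into (c) yields
\[
\mathcal{V}^2(X_t,Y)=4p^2(1-p)^2\sqrt{2t}\,\bigl[h(|c|\sqrt{t/2})-h(0)\bigr].
\]
In the stochastic case all means vanish, so $I_{ij}(t)=h(0)\sqrt{\sigma_i^2+\sigma_j^2}$ with $\sigma_0^2=t$, $\sigma_1^2=t+t^2$, and (c) gives $\mathcal{V}^2(X_t,Y)=4p^2(1-p)^2 h(0)\sqrt{t}\,\chi(t)$ with $\chi(t)=\sqrt{2+t}-\tfrac12\sqrt{2}-\tfrac12\sqrt{2+2t}$.

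The third, and decisive, step is to show that in each case the bracketed function is a product of two nonnegative, strictly increasing functions of $t$ on $(0,1)$, which forces the whole expression to be strictly increasing there and hence to attain its only relative maximum at the right endpoint $t=1$. For this I would use that $h$ is even with $h'(\lambda)=2\Phi(\lambda)-1>0$ for $\lambda>0$; thus $t\mapsto\sqrt{2t}$ and $t\mapsto h(|c|\sqrt{t/2})-h(0)$ are both nonnegative and increasing in the linear case, while $t\mapsto\sqrt{t}$ together with $\chi$ (which satisfies $\chi(0)=0$ and $\chi'(t)=\tfrac12\bigl[(2+t)^{-1/2}-(2+2t)^{-1/2}\bigr]>0$) play the same roles in the stochastic case. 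Since a product of positive increasing functions is increasing, strict monotonicity follows, and the claim about the unique relative maximum at $t=1$ is then immediate (the value at $t=0$ is $0$, a minimum, and no interior point can be a local maximum of a strictly monotone function).

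I expect the main obstacle to be organisational rather than deep: the evaluations of $I_{ij}(t)$ are routine folded-normal computations, so the real work is to arrange the resulting expressions into the product-of-increasing-functions form that makes monotonicity transparent. The only points needing care are the endpoint nature of the maximum — one should note explicitly that on the closed interval $[0,1]$ a strictly increasing function has its unique relative maximum at $t=1$ — and the verification that the second factor vanishes at $t=0$ and is genuinely increasing (via the signs of $h'$ and of $\chi'$), which guarantees positivity on $(0,1)$ rather than mere monotonicity.
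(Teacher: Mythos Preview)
Your proof is correct and follows the same route as the paper: both apply Theorem~\ref{expresiones}(c), compute the $I_{ij}(t)$ via the folded-normal mean, and then argue monotonicity in $t$. Your execution of the last step is actually sharper than the paper's: for the linear-trend model the paper only says the resulting expression ``can [be] check[ed] numerically'' to be increasing, whereas your factorization $\sqrt{2t}\,[h(|c|\sqrt{t/2})-h(0)]$ together with $h'(\lambda)=2\Phi(\lambda)-1>0$ yields a clean analytic proof; similarly, for the stochastic-trend model the paper simply asserts growth, while your computation $\chi'(t)=\tfrac12\bigl[(2+t)^{-1/2}-(2+2t)^{-1/2}\bigr]>0$ with $\chi(0)=0$ settles it directly. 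The product-of-nonnegative-increasing-functions device is exactly what the paper's argument is missing.
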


\begin{remark}\label{rem:maximo}
Other similar results could be obtained for the model considered in Proposition \ref{BvsBTri} as well as for the Brownian bridge vs. Brownian motion model. 
\end{remark}

The model considered in Proposition \ref{BvsBST} provides a clear example of  the advantages of using the distance covariance measure $\V^2(X_t,Y)$ rather than the ordinary covariance 
$Cov^2(X_t,Y)$ in the maxima-hunting procedure. Indeed, note that in this case, $Cov^2(X_t,Y) = p^2(1-p)^2({\mathbb E}(X(t)|Y=0)-{\mathbb E}(X(t)|Y=1))^2 = 0,$ for all $t\in[0,1]$, 
so that the ordinary covariance is useless to detect any difference between 
the values of $t$.

\section{A simulation study}\label{sim}

We describe here in detail the methods under study and the models to be considered together with a summary of the results. The full outputs can be found in \url{www.uam.es/antonio.cuevas/exp/outputs.xlsx}.

\subsection{The variable selection methods under study. Criteria for comparisons}\label{comp}
These are the methods, and their corresponding  notations as they appear in the tables and figures below.

 1. {\bf Maxima-hunting}.  The functional data $x(t),$  $t\in[0,1]$ are discretized to $(x(t_1),\ldots,$ $x(t_N))$, so a non-trivial practical problem is to decide which points in the grid are the local maxima: a point $t_i$ is declared to be a local maximum when it is the highest local maximum on the sub-grid $\{t_j\}$, $j=i-h\ldots,i+h$. The proper choice of $h$ depends on the nature and discretization pattern of the data at hand. Thus, $h$ could be considered as a smoothing parameter to be selected in an approximately optimal way. In our experiments $h$ is chosen by a validation step explained in next section. 

Then, we sort the maxima $t_i$  by \bf relevance \rm (the value of the function at $t_i$). This seems to be the natural order and it produces better results than other simple sorting strategies. We denote these maxima-hunting methods by \textbf{MHR} and \textbf{MHV} depending on the use of ${\cal R}^2$ or ${\cal V}^2$.

  2. \bf Univariate $t$-ranking method\/\rm, denoted by \textbf{T}, is frequently used when selecting relevant variables (see e.g. the review by \citet{fan10}). It is based on the simple idea of selecting the variables $X_t$ with highest Student's $t$ two-sample scores
$T(X_t)=|\bar{X}_{1t}-\bar{X}_{0t}|/\sqrt{s^2_{1t}/n_1+s^2_{0t}/n_0}$.

 3. {\bf mRMR}. The minimum Redundancy Maximum Relevance algorithm, proposed in \citet{din05} and \citet{pen05},  is a relevant intrinsic variable selection method; see \citet{ber15} for a recent contribution. It aims at maximizing the relevance of the selected variables avoiding an excess of redundancy what seems particularly suitable for functional data. Denoting the set of selected variables by $S$, the variables are sequentially incorporated to $S$ with the criterion of maximizing the difference $Relevance(S)-Redundancy(S)$ (or alternatively the quotient $Relevance(S)/Redundancy(S)$). Two ways of measuring relevance and redundancy have been proposed: first, we can use the Fisher statistic for relevance and the standard correlation for redundancy. Second, a three-fold discretized version of the so-called \textit{Mutual Information} measure for both relevance and redundancy (see \citet[equation (1)]{din05}).

In principle these two approaches are intended for continuous and discrete variables respectively. However, \citet{din05} report a good performance for the second one even in the continuous case. We have considered mRMR as a natural competitor for our maxima-hunting approximation. We have computed both Fisher-Correlation and Mutual Information approaches with both difference and quotient criteria. For the sake of clarity we only show here the results of \textbf{FCQ} (Fisher Correlation Quotient) and \textbf{MID} (Mutual Information Difference) which outperform on average their corresponding counterparts. 

  4. {\bf PLS}. According to the available results (\citet{pre07,del12}) PLS is the ```method of choice'' for dimension reduction in functional classification. Note however that PLS is not a variable selection procedure; in particular it lacks the interpretability of variable selection. In some sense, the motivation for including PLS is to check how much do we lose by restricting ourselves to variable selection methods, instead of considering other more general linear projections procedures (as PLS) for dimension reduction.

 5. {\bf Base}. The $k$-NN classifier is applied to the entire curves. The Base  performance can be seen as a reference to assess the usefulness of dimension reduction methods. Somewhat surprisingly, Base is often outperformed. Note that the Base method cannot be implemented with  LDA 
since this classifier typically fails with infinite or high-dimensional data; see, e.g. \citet[Section 6.1]{cue14}, for some insights and references.

The \bf classifiers \rm used in all cases are either $k$-NN, based on the Euclidean distance or  LDA (applied to the selected variables). Similar comparisons could be done with other classifiers, since the considered methods do not depend on the classifier. For comparing the different methods we use the natural accuracy measure, defined by the percentage of correct classification.

\subsection{The structure of the simulation study}\label{estruct}
Our simulation study consists of 400 experiments, aimed at comparing the practical performances of several intrinsic variable selection methods described in the previous subsection. These experiments  are obtained by considering 100 different underlying models and 4 sample sizes, where by ``model'' we mean either,
\begin{itemize}
\item[(M1)] a pair of distributions for $X|Y=0$ and $X|Y=1$ (corresponding to $P_0$ and $P_1$, respectively); in all cases, we take $p={\mathbb P}(Y=1)=1/2$.
\item[(M2)] The marginal distribution of $X$ plus $\eta(x)={\mathbb P}(Y=1|X=x)$. 
\end{itemize}
Models vary in difficulty and number of relevant variables. In all the considered models the optimal Bayes rule turns out to depend on a finite number of relevant variables, see Section 3. The processes involved  include also different levels of smoothing.  The full list of considered models is available in the 
Supplementary Material document. All of them belong to one of the following classes:

1. \bf Gaussian models\rm: they are denoted $G1, G1b,\ldots, G8$. All of them are generated according to the general pattern (M1). In all cases the distributions of $X(t)|Y=i$ are 
chosen among one of the following types: first, the \bf standard Brownian Motion\rm, $B$, in $[0,1]$. Second,
\bf Brownian Motion, $BT$, with a trend \rm $m(t)$, i.e., $BT(t)$ $=B(t)+m(t)$ (we have considered several choices for $m(t)$).
Third, the \bf Brownian bridge\rm: $BB(t)=B(t)-tB(1)$. Our fourth class of Gaussian processes is the \bf Ornstein–Uhlenbeck process\rm, with a covariance function of type $\gamma(s,t)=a\exp(-b|s-t|)$ and zero mean ($OU$) or different mean functions $m(t)$ ($OUt$). Finally smoother processes have been also computed by convolving Brownian trajectories with Gaussian kernels. We have considered two levels of smoothing denoted by sB and ssB.

 2. \bf Logistic models\rm : they are defined through the general pattern (M2): the process $X=X(t)$ follows one of the above mentioned distributions and $Y\sim\mbox{Binom}(1,\eta(X))$ with 
$\eta(x)=(1+e^{-\Psi(x(t_1),\cdots,x(t_d))})^{-1}$,
a function of the relevant variables $x(t_1),\cdots,x(t_d)$. We have considered 15 versions of this model and a few variants, denoted $L1, L2$, $L3, L3b, \ldots, L15$. They correspond to 
different choices for the link function $\Psi$ (most of them linear or polynomial) and for the distribution of $X$.  For example, in the models L2 and L8 we have $\Psi(x)=10x_{30}+10x_{70}$ and $\Psi(x)=10x_{50}^4+50x_{80}^3+20x_{30}^2$, respectively. 

 3.  \bf Mixtures\rm: they are obtained by combining (via mixtures) in several ways the above mentioned Gaussian distributions assumed for $X|Y=0$ and $X|Y=1$. These models are denoted M1, ..., M11 in the output tables.

For each model, all the variable selection methods (as well as PLS) are checked for sample sizes $n=30$, 50, 100, 200. So we get $100\times 4=400$ experiments.  

All the functional simulated data  are \bf discretized \rm to $(x(t_1), \ldots, x(t_{100}))$, where $t_i$ are equispaced points in $[0,1]$. In fact (to avoid the degeneracy $x(t_0)=0$ in the Brownian-like models) we take $t_1=6/105$. Similarly, for the case of the Brownian bridge, we truncate as well at the end of the interval.

The involved parameters  are: the number $k$ of nearest neighbors in the $k$-NN classifier, the dimension of the reduced space (number of variables or PLS components) and the smoothing parameter $h$ in maxima-hunting methods. These are set by standard data-based validation procedures. Parameter validation can be carried out mainly through a validation set or by cross-validation on the training set (see e.g. \cite{guy06}).  In the case of the simulation study, validation and test samples of size 200 are randomly generated. In the real data sets we proceed by cross-validation.

\subsection{A few numerical outputs from the simulations}\label{outputs}

We have selected (with no particular criterion in mind) a sampling of just a few examples among the 400 experiments. The complete simulation outputs can be downloaded from 
\url{www.uam.es/antonio.cuevas/exp/outputs.xlsx}. 
 Table 1 provides the performance (averaged on 200 runs) measured in terms of classification accuracy (percentages of correct classification). Models are presented in rows and methods in columns.
The marked outputs correspond to the winner and second best method in each row.

\begin{table}
	\caption{\footnotesize Average correct classification outputs, over 200 runs, with $n=50$.  } 
	\begin{center}
		\begin{footnotesize}
			\begin{tabular}{lccccccc}\hline\noalign{\smallskip}
				\multicolumn{8}{c}{\rm \bf $k$-NN outputs}\\      
				Models & FCQ & MID & T & PLS & MHR & MHV & Base\\
				\hline\noalign{\smallskip}
				L2\_OUt & 82.47 & 82.11 & 81.68 & \framebox{ 83.27} & 83.22 & \framebox{ 83.23} & 82.60\\
				L6\_OU & 88.41 & 89.81 & 86.19 & \framebox{ 90.93} & 90.75 & \framebox{ 90.83} & 90.56\\
				L10\_B & 81.09 & 85.02 & 81.13 & 85.90 & \framebox{ 87.27} & \framebox{ 87.42} & 85.46\\
				L11\_ssB & 82.31 & 80.85 & 82.28 & 78.81 & \framebox{ 83.10} & \framebox{ 82.81} & 79.89\\
				L12\_sB & 77.24 & 75.83 & \framebox{ 77.41} & 74.92 & \framebox{ 78.57} & 76.62 & 74.78\\
				G1 & 65.86 & 70.70 & 65.57 & 66.95 & \framebox{ 71.59} & \framebox{ 71.80} & 70.10\\
				G3 & 63.09 & 73.39 & 60.57 & 60.56 & \framebox{ 77.47} & \framebox{ 77.06} & 65.26\\
				G6 & 84.27 & 91.95 & 84.14 & \framebox{ 93.67} & 93.38 & \framebox{ 93.71} & 92.19\\
				M2 & 70.77 & 69.82 & 69.16 & \framebox{ 78.16} & 74.76 & \framebox{ 75.68} & 71.14\\
				M6 & 81.15 & 83.08 & 79.73 & \framebox{ 83.47} & \framebox{ 83.32} & 83.35 & 80.99\\
				M10 & 64.93 & 68.33 & 64.58 & 68.25 & \framebox{ 70.66} & \framebox{ 70.94} & 68.95\\
				\noalign{\smallskip}\hline  \noalign{\smallskip}\noalign{\smallskip}
				\multicolumn{8}{c}{\rm \bf LDA outputs}\\    
				Models & FCQ & MID & T & PLS & MHR & MHV & Base\\
				\hline\noalign{\smallskip}
				L2\_OUt & 79.80 & 78.95 & 78.23 & 80.07 & \framebox{ 80.24} & \framebox{ 80.14} & -\\
				L6\_OU & 87.79 & 88.91 & 84.46 & \framebox{ 91.01} & \framebox{ 89.44} & 89.35 & -\\
				L10\_B & 75.97 & 75.44 & 76.04 & 77.60 & \framebox{ 77.63 }& \framebox{ 77.76} & -\\
				L11\_ssB & 80.95 & 80.09 & 80.81 & 79.39 & \framebox{ 81.88} & \framebox{ 81.63} & -\\
				L12\_sB & 76.39 & 75.20 & \framebox{ 76.40} & 75.02 & \framebox{ 77.38} & 75.96 & -\\
				G1 & 51.27 & 51.24 & 51.20 & 51.44 & \framebox{51.55} &\framebox{ 51.70} & -\\
				G3 & 51.09 & 52.26 & 50.96 & 50.35 & \framebox{52.95} & \framebox{52.69} & -\\
				G6 & 87.72 & 95.28 & 87.80 & \framebox{ 97.77} & 96.54 & \framebox{ 96.85} & -\\
				M2 & 67.44 & 76.51 & 66.81 & \framebox{ 84.38} & 82.24 & \framebox{ 83.06} & -\\
				M6 & 79.99 & 79.92 & 79.63 & \framebox{ 81.39} & 81.08 & \framebox{ 81.38} & -\\
				M10 & 60.03 & 65.61 & 59.24 & \framebox{ 67.49} & 67.25 & \framebox{ 67.99} & \\
				\noalign{\smallskip}\hline
			\end{tabular}
		\end{footnotesize}
	\end{center}
\end{table}

The outputs of Table 1 are more or less representative of the overall conclusions of the entire study. For instance, MHR appears as the overall winner on average with a slight advantage. PLS and the maxima-hunting methods (MHR and MHV) obtain similar scores and clearly outperform the other benchmark methods.  Note that they also beat (often very clearly) the Base method in almost all cases  using just a few variables. This shows that dimension reduction is, in fact, ``mandatory'' in many cases.  Regarding the comparison of $k$-NN and LDA in the second stage (after dimension reduction) the results show a slight advantage for $k$-NN (on average). The complete failure of LDA in models G1 and G3 was to be expected since in these cases the mean functions are identical in both populations.  In terms of number of variables, when $k$-NN is used, MHR and MHV need less variables to achieve better results than the rest of variable selection methods. When LDA is used, the
  number of required variables is quite similar in all methods; see the Supplementary Material, Section S4.

\section{Real data examples}\label{real}
 
We have chosen three examples due to their popularity in FDA. There are many references on these datasets so we will just give  brief descriptions of them; additional details can be found in the Supplementary Material document. Figure \ref{fig:reales} shows the trajectories $X(t)$ and mean functions for each set and each class.

\begin{figure}[h!]\begin{center}
		
		\includegraphics[scale=0.5]{./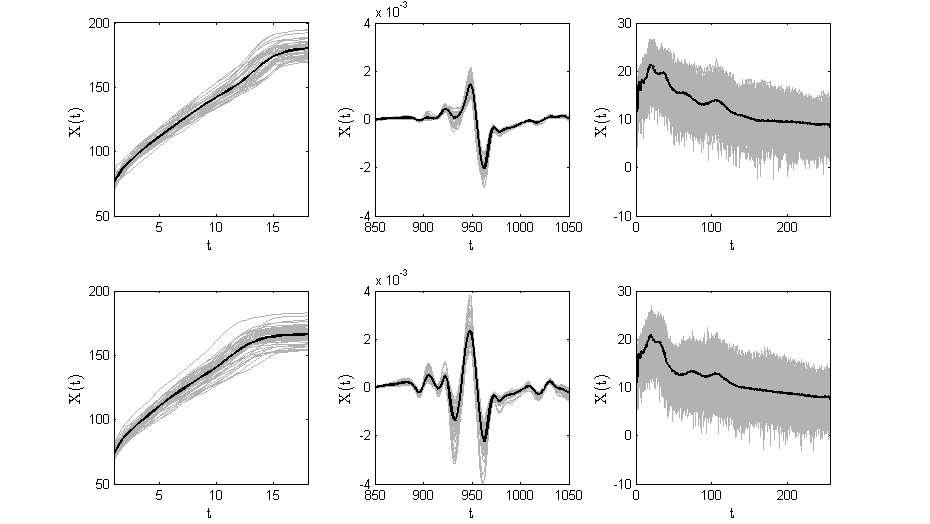}\par
		\caption{\footnotesize Data trajectories and mean functions from class 0 (first row) and class 1 (second row). Columns correspond to growth, Tecator and phoneme data from left to right.}\label{fig:reales}
	\end{center}
\end{figure}

{\it Berkeley Growth Data.}
The heights of 54 girls and 39 boys measured at 31 non equidistant time points. See, e.g., \citet{ram05}.

{\it Tecator.} 215 near-infrared absorbance spectra (100 grid points each) of finely chopped meat, obtained using a Tecator Infratec Food \& Feed Analyzer. The sample is separated in two classes according to
the fat content (smaller or larger than 20\%). 
Tecator curves are  often used in a differentiated version. We use here the second derivatives. See \citet{fer06} for details.

{\it Phoneme.}
As in \citet{del12a} we use the ``binary'' version of these data corresponding to log-periodograms constructed from 32 ms long
recordings of males pronouncing the phonemes  ``aa'' and ``ao''. The sample size is $n=1717$ ($695$ from ``aa'' and  $1022$ from ``ao''). Each curve was observed at 256 equispaced points.

In the comparisons with real data sets we have incorporated the method recently proposed by \cite{del12a}. We denote it by DHB. Given a classifier, the DHB method proposes a leave-one-out choice of the best variables for the considered classification problem. While this is a worthwhile natural idea, it is computationally intensive. 
So the authors implement a slightly modified version, which we have closely followed. It is based on a sort of trade-off between full and sequential search, together with some additional computational savings. 
Let us note, as an important difference with our maxima-hunting method, that the 
DHB procedure is a ``wrapper'' method, in the sense that it depends on the chosen classifier. Following \cite{del12a}, we have only implemented the DHB method with the LDA classifier.

Apart from that, we proceed as in the simulation study except for the generation of the training, validation and test samples. Here we consider the usual cross-\-validation procedure which avoids splitting the sample (sometimes small) into three different sets. Each output is obtained by standard leave-one-out cross-\-validation. The only exception is the phoneme data set for which this procedure is extremely time-consuming (due to the large sample size); so we use instead ten-fold cross-validation  (10CV). The respective validation steps are done with the same resampling schemes within the training samples. This is a usual way to proceed when working with real data; see \citet[Subsection 7.10]{has09}.
 Several outputs  are given in Tables 2 (accuracy) and 3 (number of variables) below.  The complete  results  can be found in \url{www.uam.es/antonio.cuevas/exp/outputs.xlsx}.

\smallskip

\begin{table}
\caption{\footnotesize  Classification accuracy (in \%) for the real data with both classifiers.}

\begin{center}\footnotesize
    \begin{tabular}{lcccccccc}\hline\noalign{\smallskip}
      \multicolumn{9}{c}{\rm \bf $k$-NN outputs}\\
        Data & FCQ & MID & T & PLS & MHR & MHV & DHB & Base\\
      \hline\noalign{\smallskip}
Growth &        83.87 & \framebox{95.70} & 83.87 & 94.62 & \framebox{95.70} & 94.62 & - & \framebox{96.77} \\
Tecator & 99.07 & 99.07 & 99.07 & 97.21 & \framebox{99.53} & \framebox{99.53} & - & 98.60 \\
Phoneme & \framebox{80.43} & 79.62 & \framebox{80.43} & \framebox{82.53} & 80.20 & 78.86 & - & 78.97 \\
        
      \noalign{\smallskip}\hline  \noalign{\smallskip}\noalign{\smallskip}
      \multicolumn{9}{c}{\rm \bf LDA outputs}\\    
        Data & FCQ & MID & T & PLS & MHR & MHV & DHB &Base\\
      \hline\noalign{\smallskip}
       Growth & 91.40 & 94.62 & 91.40 & 95.70 & 95.70 & \framebox{96.77} & \framebox{96.77} & - \\
Tecator & 94.42 & \framebox{95.81} & 94.42 & 94.42 & \framebox{95.35} & 94.88 & \framebox{95.35} & - \\
Phoneme & 79.38 & \framebox{80.37} & 79.09 & \framebox{80.60} & 80.20 & 78.92 & 77.34 & - \\
      \noalign{\smallskip}\hline
    \end{tabular} 
\end{center}
\end{table}

\smallskip

\begin{table}
	\caption{\footnotesize  Average number of variables (or components) selected for the real data sets.}
\begin{center}\footnotesize
    \begin{tabular}{lcccccccc}\hline\noalign{\smallskip}
      \multicolumn{9}{c}{\rm \bf $k$-NN outputs}\\
        Data & FCQ & MID & T & PLS & MHR & MHV & DHB & Base\\
      \hline\noalign{\smallskip}
Growth &       \framebox{ 1.0} & 3.5 & \framebox{1.0} & 2.8 & 4.0 & 4.0 & - & 31 \\
Tecator & 3.0 & 5.7 & 3.0 & 2.7 & \framebox{1.0} & \framebox{1.0} & - & 100 \\
Phoneme & \framebox{10.7} & 15.3 & 12.3 & 12.9 & \framebox{10.2} & 12.3 & - & 256 \\        
      \noalign{\smallskip}\hline  \noalign{\smallskip}\noalign{\smallskip}
      \multicolumn{9}{c}{\rm \bf LDA outputs}\\    
        Data & FCQ & MID & T & PLS & MHR & MHV & DHB &Base\\
      \hline\noalign{\smallskip}
       Growth & 5.0 & 3.4 & 5.0 & \framebox{2.0} & 4.0 & 4.0 & \framebox{2.3} & - \\
Tecator & 8.4 & 2.6 & 3.1 & 9.7 & \framebox{1.7} & \framebox{1.8} & 3.0 & - \\
Phoneme & 8.5 & 17.1 & \framebox{7.9} & 15.5 & 16.1 & 11.0 & \framebox{2.0} & - \\
      \noalign{\smallskip}\hline
    \end{tabular}
\end{center}
\end{table}

These results are similar to those obtained in the simulation study. While (as expected) there is no clear global winner, maxima-hunting method looks as a very competitive choice. In particular, Tecator outputs are striking, since MHR and MHV achieve (with $k$-NN) a near perfect classification with just one variable.
Note also that maxima-hunting methods (particularly MHR) outperform or are very close to the Base outputs (which uses the entire curves). 
PLS is overcome by our methods in two of the three problems but it is the clear winner in phoneme example. In any case, it should be kept in mind, as a counterpart, the ease of interpretability of the variable selection methods. 

The DHB method performs well in the two first considered examples but relatively fails in the phoneme case. There is maybe some room for improvement in the stopping criterion (recall that we have used the same parameters as in \cite{del12a}). Recall also that, by construction, this is (in the machine learning terminology)  a ``wrapper'' method. This means that the variables selected by DHB are specific for the LDA classifier (and might dramatically change with other classification rules). 
Also note that the use of the LDA classifier didn't lead to any significant gain; in fact, the results are globally worse than those of $k$-NN except for a few particular cases.

Although our methodology is not primarily targeted to the best classification rate, but to the choice of the most representative variables, we can conclude that MH procedures combined with the simple $k$-NN are competitive when compared with PLS and other successful and sophisticated methods in literature: see \citet{gal14} for Tecator data, \citet{mos14} for growth data and \citet{del12a} for phoneme data.

\section{Overall conclusions: a tentative global ranking of methods} \label{conclusiones}
We have summarized the conclusions of our 400 simulation experiments in three  rankings, prepared with different criteria, according to \bf classification accuracy\rm.   With the \bf relative ranking \/ \rm criterion, the winner method (with performance $W$) in each of the 400 experiments gets 10 score points, and the method with the worst performance (say $w$) gets 0 points. The score 
 of any other method, with performance $u$ is just assigned in a proportional way: $10(u-w)/(W-w)$. The \bf positional ranking\/ \rm scoring criterion just gives 10 points to the winner in every experiment, 9 points to the second one, etc.  Finally, the \textbf{F1 ranking} rewards strongly the winner. For each experiment, points are divided as in an F1 Grand Prix: the winner gets 25 points and the rest 18, 15, 10, 8, 6 and 4 successively.
The final average scores are given in Table 4. The winner and the second best methods in each category appear marked.

\begin{table}
	\caption{\footnotesize    Average ranking scores over the 400 experiments.}    
\par
\vskip .2cm
\begin{center}\footnotesize
    \begin{tabular}{lccccccc}\hline\noalign{\smallskip}
      \multicolumn{8}{c}{\rm \bf $k$-NN rankings}\\
        Ranking criterion & FCQ & MID & T & PLS & MHR & MHV &  Base\\
      \hline\noalign{\smallskip}
        Relative  & 4.42 & 5.80 & 2.93 & 6.99 & \framebox{8.42} & \framebox{7.35} & 3.64\\
        Positional  & 6.44 & 6.71 & 5.50 & \framebox{7.96} & \framebox{8.68} & 7.84 & 5.89\\
        F1 & 11.62 & 12.04 & 9.46 & \framebox{17.39} & \framebox{17.96} & 15.41 & 10.15\\
      \noalign{\smallskip}\hline  \noalign{\smallskip}\noalign{\smallskip}
      \multicolumn{8}{c}{\rm \bf LDA rankings}\\    
        Ranking criterion & FCQ & MID & T & PLS & MHR & MHV  & Base\\
      \hline\noalign{\smallskip}
        Relative & 3.76 & 5.19 & 1.96 & 6.90 & \framebox{8.62} & \framebox{8.07} & - \\
        Positional & 6.70 & 6.99 & 5.92 & 8.13 & \framebox{8.79} & \framebox{8.49} & -\\
        F1  & 11.95 & 12.52 & 10.22 & \framebox{17.49} & \framebox{18.41} & 17.47 & -\\
      \noalign{\smallskip}\hline
    \end{tabular}
\end{center}\end{table}

The results are self-explanatory. Nevertheless, \bf the following conclusions might be of some interest for practitioners\rm:

1.  The maxima-hunting methods are the global winners (in particular when using the distance correlation measure), even if there is still room for improvement in the maxima identification.
In fact, the maxima-hunting procedures   result in  accuracy improvements (with respect to the ``base error'', i.e., using the whole trajectories) in 88.00\% of the considered experiments. Overall, the gain of accuracy 
associated with \bf MHR \rm variable selection is relevant (2.41\%).

2. While the univariate ranking methods, such as the $t$ ranking, (which ignore the dependence between the involved variables) are still quite popular among practitioners, they are clearly outperformed by the ``functional'' procedures. It is quite remarkable the superiority of the maxima-hunting methods on the rest of variable selection procedures, requiring often a lesser number of variables.

3. As an important overall conclusion, variable selection appears as a \bf highly competitive alternative to PLS\rm, which is so far the standard dimension reduction method in high-dimensional and functional statistics (whenever a response variable is involved). The results of the above rankings show that variable selection offers a better balance in terms of both accuracy and interpretability.

4. On average, the use of the classical Fisher's discriminant rule LDA (after dimension reduction) provides worse results than the 
nonparametric $k$-NN rule. An example of superiority of a linear classifier is shown in \cite{del12b} where an asymptotic optimality result is provided. In addition, under some conditions, the proposed classifier turns out to be ``near-perfect'' (in the sense that the probability of classification error can be made arbitrarily small) to discriminate between two Gaussian processes.  This is an interesting  phenomenon which does not appear in the finite dimensional case. However, it requires that the Gaussian measures under discrimination are mutually singular (note that this situation cannot happen with two non-degenerate Gaussian measures in ${\mathbb R}^d$). This topic will be considered in a forthcoming manuscript by the authors. 

\smallskip 

\noindent \it A final remark\rm. The present study shows that there are several quite natural models in which the maxima-hunting method is definitely to be recommended. The real data results are also encouraging.  Our results suggest that, even when there is no clear, well-founded guess on the nature of the underlying model, the idea of selecting the maxima of the distance correlation is a suitable choice, that always allows for a direct interpretation. It is natural to ask what type of models would typically be less favorable for the maxima-hunting approach.  As a rough, practical guide, we might say that some adverse situations might typically arise in those cases where the trajectories are extremely smooth,  or when they are very wiggly, with 
many noisy abrupt peaks which tend to mislead the calculation of the maxima in the distance correlation function.

\vskip 14pt
\noindent {\large\bf Supplementary Materials.} All the proofs and two auxiliary results can be found in the appendix. 
Some further methodological and technical details are explained in the Supplementary Materials document below. It also includes  some  extra simulation outputs and the list of the 100 considered models. The full simulation outputs are included in an Excel file downloadable from \url{www.uam.es/antonio.cuevas/exp/outputs.xlsx}.
\par
\vskip 14pt
\noindent {\large\bf Acknowledgment.}
This research has been supported by Spanish grant MTM2013-44045-P.

\setcounter{section}{8}
\section*{Appendix: Some results and proofs}\label{sec:prop}
To prove Theorem 2 we need  two lemmas dealing with the uniform strong consistency of one-sample and two-sample functional U-statistics, respectively.

\begin{lemma}
	\label{lemma:Ustatistics}
	Let $X: T\to \mathbb{R}$ be a process with continuous trajectories a.s. defined on the compact rectangle $T=\prod_{i=1}^d [a_i,b_i]\subset \mathbb{R}^d$. Let $X_1,\ldots, X_n$ be a sample of $n$ independent trajectories of $X$. Define the functional U-statistic
	\[
	U_n(t) = \frac{2}{n(n-1)} \sum_{i<j} k[X_i(t), X_j(t)],
	\]
	where the kernel $k$ is a real continuous, permutation symmetric function. Assume that   
	$$
	\mathbb{E}\big(\sup_{t\in T} |k[X(t),X'(t)]|\big)<\infty,
	$$
	where $X$ and $X'$ denote two independent copies of the process. Then, as $n\to\infty$, $\|U_n - U\|_\infty \to 0,\ \ \mbox{a.s.,}$
	where $U(t)=\mathbb{E}(k[X(t),X'(t)])$.
\end{lemma}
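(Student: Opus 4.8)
The plan is to establish a uniform strong law of large numbers by upgrading the classical \emph{pointwise} strong law for U-statistics to uniform convergence via a finite-net argument, the decisive device being to control the oscillation of $U_n$ by an auxiliary U-statistic built from the modulus of continuity of the kernel.

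First I would record the structural facts. For each fixed $t$, the envelope hypothesis forces $\mathbb{E}|k[X(t),X'(t)]|<\infty$, so the classical strong law of large numbers for U-statistics with integrable kernel gives $U_n(t)\to U(t)$ a.s. Moreover both functions are continuous: $U_n$ is a finite sum of the continuous maps $t\mapsto k[X_i(t),X_j(t)]$, while continuity of $U$ follows from dominated convergence using the integrable envelope $\sup_{u\in T}|k[X(u),X'(u)]|$.

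The heart of the argument is to quantify equicontinuity. For $\delta>0$ and continuous trajectories $x,x'$ I set $\omega_\delta(x,x')=\sup_{|s-t|_d\le\delta}|k[x(s),x'(s)]-k[x(t),x'(t)]|$. This function is permutation symmetric (because $k$ is), is dominated by $2\sup_{u}|k[x(u),x'(u)]|$ and hence integrable, and for each fixed pair tends to $0$ as $\delta\to0$ by uniform continuity of $t\mapsto k[x(t),x'(t)]$ on the compact $T$; dominated convergence then yields $m(\delta):=\mathbb{E}[\omega_\delta(X,X')]\to0$. The point is that the oscillation of $U_n$ over any $\delta$-ball is bounded by the U-statistic $W_n(\delta)=\frac{2}{n(n-1)}\sum_{i<j}\omega_\delta(X_i,X_j)$, to which the pointwise strong law applies again, giving $W_n(\delta)\to m(\delta)$ a.s.; likewise the modulus of continuity of $U$ is bounded by $m(\delta)$.

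Finally I would assemble these pieces on a finite net. Fix $\epsilon>0$, choose $\delta$ with $m(\delta)<\epsilon$, and cover $T$ by a finite $\delta$-net $\{t_1,\dots,t_N\}$ (compactness). Splitting $|U_n(t)-U(t)|$ through the nearest net point $t_k$ gives $\sup_t|U_n(t)-U(t)|\le W_n(\delta)+\max_{j\le N}|U_n(t_j)-U(t_j)|+m(\delta)$, whence $\limsup_n\sup_t|U_n-U|\le 2m(\delta)<2\epsilon$ a.s., the middle term vanishing as a finite maximum of a.s. convergent quantities. Letting $\epsilon$ run through a sequence tending to $0$, so that only countably many null sets intervene, yields the claim. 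The main obstacle is exactly the passage from pointwise to uniform control, which the oscillation U-statistic $W_n(\delta)$ resolves; a minor technical point is the measurability and integrability of $\omega_\delta$, both of which follow from continuity of the trajectories (reducing the supremum to a countable dense set of pairs) together with the envelope hypothesis.
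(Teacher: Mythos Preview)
Your proof is correct and follows essentially the same architecture as the paper's: a finite $\delta$-net on the compact $T$, pointwise SLLN for U-statistics at the net points, and control of the oscillation between net points by an auxiliary U-statistic whose expectation vanishes as $\delta\to 0$.

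The one noteworthy difference is in how the oscillation is packaged. The paper works with the \emph{anchored} modulus $M_\delta(t)=\sup_{|s-t|_d\le\delta}|h(s)-h(t)|$, which depends on $t$; to make the bound uniform in $t$ it must show $\sup_{t}\mathbb{E}[M_\delta(t)]\to 0$, and for this it invokes Dini's theorem (continuity of $t\mapsto\mathbb{E}[M_\delta(t)]$, monotonicity in $\delta$, pointwise convergence to $0$ on compact $T$). You instead use the \emph{global} modulus $\omega_\delta(x,x')=\sup_{|s-t|_d\le\delta}|k[x(s),x'(s)]-k[x(t),x'(t)]|$, which does not depend on $t$; a single application of dominated convergence gives $m(\delta)=\mathbb{E}[\omega_\delta]\to 0$, and the Dini step disappears. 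Since $M_\delta(t)\le\omega_\delta$ for every $t$, your bound is marginally coarser but perfectly adequate, and the resulting argument is a bit more streamlined. Otherwise the two proofs coincide.
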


\begin{proof}
	First, we show that $U(t)$ is continuous. Let $t_n \subset T$ such that $t_n\to t$. Then, due to the continuity assumptions on the process and the kernel, $k[X(t_n),X'(t_n)]\to k[X(t),X'(t)]$, a.s. Using the assumption $\mathbb{E}\big(\sup_{t\in T} |k[X(t),X'(t)]|\big)<\infty$,  Dominated Convergence Theorem (DCT) allows us to deduce $U(t_n)$ $\to U(t)$.

	Let $M_\delta(t)=\sup_{s: |s-t|_d \leq \delta} |h(s) - h(t)|$ where, for the sake of simplicity, we denote $h(t) = k[X(t),X'(t)]$. The next step is to prove that, as $\delta \downarrow 0$,
	\begin{equation}
	\label{eq.sup}
	\sup_{t\in T} \mathbb{E} ( M_\delta(t)) \to 0.
	\end{equation}
	Both $M_\delta(t)$ and $\lambda_\delta (t)= \mathbb{E}(M_\delta(t))$ are continuous functions. Since $h(t)$ is uniformly continuous on $\{s: |s-t|_d \leq\delta\}$, $M_\delta(t)$ is also continuous. The fact that $\lambda_\delta (t)$ is continuous follows directly from DCT since $|M_\delta(t)| \leq 2\sup_{t\in T} |h(t)|$ and, by assumption, $\mathbb{E}(\sup_{t\in T} |h(t)|)<\infty$. By continuity, $M_\delta (t)\to 0$ and  $\lambda_\delta (t)\to 0$, as $\delta\downarrow 0$. Now, since $\delta > \delta'$ implies $\lambda_\delta(t) \geq \lambda_{\delta'}(t)$, for all $t\in T$, we can apply Dini's Theorem to deduce that $\lambda_\delta(t)$ converges uniformly to 0, that is, $\sup_{t\in T}\lambda_\delta(t)\to 0$, as $\delta\downarrow 0$.

	The last step is to show $\|U_n - U\|_\infty \to 0$ a.s., as $n\to\infty$. For $i\neq j$, denote $M_{ij,\delta}(t)=\sup_{s:|s-t|_d<\delta} |h_{ij}(s) - h_{ij}(t)|$, where $h_{ij}(t) = k[X_i(t),X_j(t)]$, and $\lambda_\delta (t)=\mathbb{E}(M_{ij,\delta}(t))$. Fix $\epsilon > 0$. By (\ref{eq.sup}), there exists $\delta >0$ such that $\lambda_{\delta} (t) < \epsilon$, for all $t\in T$. Now, since $T$ is compact, there exist $t_1,\ldots, t_m$ in $T$ such that $T= \cup_{k=1}^m B_k$, where $B_k = \{t: |t-t_k|_d \leq \delta\}\cap T$. Then,
	\begin{align*}
	\|U_n - U\|_\infty & = \max_{1\leq k \leq m} \sup_{t\in B_k} |U_n(t) - U(t)| \\
	&\leq 
	\max_{1\leq k \leq m} \sup_{t\in B_k}[|U_n(t) - U_n(t_k)| + |U_n(t_k) - U(t_k)| + |U(t_k) - U(t)|  ]\\
	&\leq
	\max_{1\leq k \leq m} \sup_{t\in B_k}|U_n(t) - U_n(t_k)| + \max_{k=1,\ldots,m}|U_n(t_k) - U(t_k)| + \epsilon,\\
	\end{align*}
	since $|s-t|_d \leq \delta$ implies $|U(s)-U(t)| = |\mathbb{E}[h(s) - h(t)] | \leq \mathbb{E}|h(s)-h(t)|\leq \lambda_\delta(t) < \epsilon.$
	
	For the second term, we have $\max_{k=1,\ldots,m}|U_n(t_k) - U(t_k)| \to 0$ a.s., as $n\to\infty$, applying SLLN for U-statistics (see e.g. DasGupta (2008), Theorem 15.3(b), p. 230). As for the first term, observe that using again SLLN for U-statistics,
	\begin{align*}
	\sup_{t\in B_k}|U_n(t) - U_n(t_k)| &\leq  
	\frac{2}{n(n-1)} \sum_{i<j} \sup_{t\in B_k} |h_{ij}(t_k) - h_{ij}(t)| \\ 
	&=  \frac{2}{n(n-1)} \sum_{i<j} M_{ij,\delta}(t_k) \to \lambda_\delta(t_k), \ \ \mbox{a.s.}, 
	\end{align*}
	where $\lambda_\delta(t_k)<\epsilon$. Therefore, 
	\begin{align*}
	\limsup_n\Vert U_n-U\Vert_\infty 
	& \leq  
	\limsup_n\max_{k=1,\ldots, m}\sup_{t\in B_k}|U_n(t) - U_n(t_k)| \\ 
	& +\limsup_n\max_{k=1,\ldots,m}|U_n(t_k) - U(t_k)|+\epsilon  \leq 2\epsilon. \\
	\end{align*}
	
\end{proof}

\begin{lemma}
	\label{lemma:twosampleUstatistics}
	Let $X^{(0)}: T\to \mathbb{R}$ and $X^{(1)}: T\to \mathbb{R}$ be a pair of independent processes with continuous trajectories a.s. defined on the compact rectangle $T=\prod_{i=1}^d [a_i,b_i]$ $\subset \mathbb{R}^d$. Let $X^{(0)}_1,\ldots, X^{(0)}_{n_0}$ and $X^{(1)}_1,\ldots, X^{(1)}_{n_1}$ be  samples of $n_0$ and $n_1$ independent trajectories of $X^{(0)}$ and $X^{(1)}$, respectively. Define the functional two-sample  U-statistic
	\[
	U_{n_0,n_1}(t) = \frac{1}{n_0n_1} \sum_{i=1}^{n_0} \sum_{j=1}^{n_1} k[X^{(0)}_i(t), X^{(1)}_j(t)],
	\]
	where the kernel $k$ is a continuous, permutation symmetric function. Assume that 
	$$
	\mathbb{E}\big(\sup_{t\in T} |h(t)|\log^+|h(t)|\big)<\infty,
	$$
	with $h(t)=k[X^{(0)}(t),X^{(1)}(t)]$. Then, as $\min(n_0, n_1) \to\infty$,
	\[
	\|U_{n_0,n_1} - U\|_\infty \to 0,\ \ \mbox{a.s.,}
	\]
	where $U(t)=\mathbb{E}(k[X^{(0)}(t),X^{(1)}(t)])$.
\end{lemma}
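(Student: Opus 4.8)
The plan is to follow the blueprint of the proof of Lemma \ref{lemma:Ustatistics} almost verbatim, the single essential modification being the replacement of the strong law of large numbers (SLLN) for one-sample U-statistics by its two-sample counterpart; this is exactly where the $\log^+$ moment hypothesis intervenes. First I would record that the stated moment condition implies $\mathbb{E}(\sup_{t\in T}|h(t)|)<\infty$: since $|h|\log^+|h|\geq|h|$ on $\{|h|\geq e\}$ while $\sup_t|h|\leq e$ on the complementary event, one has $\mathbb{E}(\sup_t|h|)\leq e+\mathbb{E}(\sup_t|h|\log^+|h|)<\infty$. With this integrability in hand, the continuity of $U(t)=\mathbb{E}(k[X^{(0)}(t),X^{(1)}(t)])$ follows from the Dominated Convergence Theorem exactly as before, and the uniform modulus statement $\sup_{t\in T}\mathbb{E}(M_\delta(t))\to 0$ as $\delta\downarrow 0$ (with $M_\delta(t)=\sup_{s:|s-t|_d\leq\delta}|h(s)-h(t)|$) carries over word for word, because that argument uses only the integrability of $\sup_t|h|$ together with Dini's Theorem.

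Second, I would reproduce the compactness step. Fixing $\epsilon>0$, I choose $\delta>0$ with $\lambda_\delta(t)=\mathbb{E}(M_\delta(t))<\epsilon$ uniformly in $t$, cover the compact rectangle $T$ by finitely many balls $B_k=\{t:|t-t_k|_d\leq\delta\}\cap T$, $k=1,\ldots,m$, and split
\[
\|U_{n_0,n_1}-U\|_\infty\leq \max_{k}\sup_{t\in B_k}|U_{n_0,n_1}(t)-U_{n_0,n_1}(t_k)|+\max_{k}|U_{n_0,n_1}(t_k)-U(t_k)|+\epsilon,
\]
where the final $\epsilon$ comes from $|U(s)-U(t)|\leq\lambda_\delta(t)<\epsilon$ whenever $|s-t|_d\leq\delta$.

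The two remaining terms are where the two-sample SLLN enters, and this is the main obstacle. For the middle term I would invoke a strong law for two-sample (generalized) U-statistics of degree $(1,1)$, valid under the $L\log L$ condition $\mathbb{E}(|h|\log^+|h|)<\infty$, to conclude $U_{n_0,n_1}(t_k)\to U(t_k)$ a.s. as $\min(n_0,n_1)\to\infty$; taking the maximum over the finitely many centers preserves the a.s. convergence. For the first term I would bound, for each $k$,
\[
\sup_{t\in B_k}|U_{n_0,n_1}(t)-U_{n_0,n_1}(t_k)|\leq\frac{1}{n_0n_1}\sum_{i=1}^{n_0}\sum_{j=1}^{n_1}M_{ij,\delta}(t_k),
\]
with $M_{ij,\delta}(t_k)=\sup_{t\in B_k}|k[X^{(0)}_i(t),X^{(1)}_j(t)]-k[X^{(0)}_i(t_k),X^{(1)}_j(t_k)]|$, which is again a two-sample U-statistic kernel, so the same SLLN gives a.s. convergence to $\mathbb{E}(M_\delta(t_k))=\lambda_\delta(t_k)<\epsilon$. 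The delicate point is to legitimately apply the two-sample SLLN to this modulus kernel, which requires checking that it inherits the moment condition: this follows from $M_{ij,\delta}(t_k)\leq 2\sup_t|h(t)|$, the monotonicity of $x\mapsto x\log^+x$, and the elementary inequality $\log^+(2a)\leq\log 2+\log^+a$, so that $\mathbb{E}(M_\delta\log^+M_\delta)<\infty$ is implied by $\mathbb{E}(\sup_t|h|\log^+\sup_t|h|)<\infty$ together with $\mathbb{E}(\sup_t|h|)<\infty$. Combining the three bounds and letting $\epsilon\downarrow 0$ along a countable sequence yields $\limsup_{\min(n_0,n_1)}\|U_{n_0,n_1}-U\|_\infty\leq 2\epsilon$ a.s., hence the claim. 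The hard part throughout is the correct statement and citation of the two-sample SLLN under the sharp $L\log L$ condition (the $\log^+$ factor being unavoidable for averages indexed by a rectangular double array) and the verification that each auxiliary kernel to which it is applied also satisfies that condition.
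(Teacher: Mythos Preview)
Your proposal is correct and follows essentially the same route as the paper: the authors simply state that the proof is analogous to that of Lemma~\ref{lemma:Ustatistics}, with the one-sample SLLN for U-statistics replaced by the two-sample version from \citet[Th.~1]{sen77}, which accounts for the $L\log L$ moment hypothesis. If anything, you are more careful than the paper in explicitly verifying that the modulus kernel $M_{ij,\delta}$ inherits the $L\log L$ integrability needed to apply the two-sample SLLN a second time.
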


\begin{proof}
	It is  analogous to the proof of Lemma \ref{lemma:Ustatistics} so it is omitted. We need to apply a strong law of large numbers for two-sample U-statistics. This  result can be guaranteed under slightly stronger conditions on the moments of the kernel; see \citet[Th.1]{sen77}. Hence the condition $\mathbb{E}\big(\sup_{t\in T} |h(t)|\log^+ |h(t)|\big) < \infty$ in the statement of the lemma.
\end{proof}

\subsection*{Proofs of the main results}

\begin{proof}[Theorem 1]
	
	\noindent (a) From (2.1), as $X_t$ is $d$-dimensional and $Y$ is one-dimensional, taking into account $c_1=\pi$, we have
	\begin{align*}
	\V^2(X_t,Y) & = \parallel \varphi_{X_t , Y} (u,v) - \varphi_{X_t} (u) \varphi_Y(v)\parallel_w ^2\\
	&= \textstyle \frac{1}{\pi c_d}\int_{\mathbb R}\int_{\mathbb{R}^d} | \varphi_{X_t , Y} (u,v) - \varphi_{X_t} (u) \varphi_Y(v)|^2 \frac{1}{|u|_d^{d+1} v^2}du dv .
	\end{align*}
	Let's analyze the integrand,
	{\small \begin{align*}
		\varphi_{X_t , Y}(u,v) - \varphi_{X_t} (u) \varphi_Y(v) &= \E{e^{i u^\top X_t} e^{ivY}}-\E{e^{iu^\top X_t}}\E{ e^{ivY}}\\
		&=\E{(e^{iu^\top X_t}-\varphi_{X_t}(u))(e^{ivY}-\varphi_{Y}(v))}
		\\&=\E{\E{(e^{iu^\top X_t}-\varphi_{X_t}(u))(e^{ivY}-\varphi_{Y}(v))| X}}\\
		&=\E{(e^{iu^\top X_t}-\varphi_{X_t}(u))\E{(e^{ivY}-\varphi_{Y}(v))| X}}
		\\&\overset{(*)}{=}\E{(e^{iu^\top X_t}-\varphi_{X_t}(u))(e^{iv}-1)(\eta(X)-p)}\\
		&=(e^{iv}-1)\E{(e^{iu^\top X_t}-\varphi_{X_t}(u))(\eta(X)-p)}
		\\&=(e^{iv}-1)\E{e^{iu^\top X_t}(\eta(X)-p)} = (e^{iv}-1)\zeta(u,t).
		\end{align*}}
	Step (*) in the above chain of equalities is motivated as follows:
	\begin{align*}
	\E{(e^{ivY}-\varphi_{Y}(v))| X} &= \E{e^{ivY}| X}-\varphi_{Y}(v)
	=(e^{iv}-1)\eta(X)  - (e^{iv}-1)p \\ &= (e^{iv}-1) ((\eta(X)-p)).
	\end{align*}
	Therefore, since
	$\int_{\mathbb R} \frac{|e^{iv}-1|^2}{\pi v^2}dv=2$,
	\begin{align*}
	\V^2(X_t,Y) = \int_{\mathbb R} \frac{|e^{iv}-1|^2}{\pi v^2}dv
	\int_{\mathbb{R}^d} \frac{|\zeta(u,t)|^2}{c_d|u|_d^{d+1}}du 
	= \frac{2}{c_d} \int_{\mathbb{R}^d} \frac{\abs{\zeta(u,t)}^2}{|u|_d^{d+1}}du.
	\end{align*}

	\
	
	\noindent (b) Since $\zeta(u,t)=\E{\left( \eta(X)-p\right)e^{iu^\top X_t}}$,
	\begin{small}
		\begin{align*}
		\abs{\zeta(u,t)}^2 &= \EE \left[ (\eta(X)-p) e^{iu^\top X_t} \right] \EE \left[ (\eta(X')-p) e^{-iu^\top X'_t} \right] \\
		&= \EE \left[ (\eta(X)-p)(\eta(X')-p) e^{iu^\top(X_t-X'_t)} \right]\\  
		&=\EE \left[ (\eta(X)-p)(\eta(X')-p) \cos(u^\top (X_t-X'_t)) \right] \\
		&=- \EE \left[ (\eta(X)-p)(\eta(X')-p)(1- \cos(u^\top(X_t-X'_t))) \right],
		\end{align*}
	\end{small}
	where we have used $\abs{\zeta(u,t)}^2 \in \R$   and 
	$\EE\left[ (\eta(X)-p)(\eta(X')-p)\right]=0$. Now, using expression (3.1),
	\begin{small}
		\begin{align*}
		\V^2(X_t,Y)&= - 2 \EE \left[ (\eta(X)-p)(\eta(X')-p) \int_{\R^d} \frac{1- \cos(u^\top(X_t-X'_t))}{c_d |u|_d^{d+1}} du \right] \\
		&= -2 \EE \left[ (\eta(X)-p)(\eta(X')-p)\abs{X_t - X'_t}_d \right]\\
		&=-2 \EE \left[ (Y-p)(Y'-p)\abs{X_t - X'_t}_d \right],
		\end{align*}
	\end{small}
	since [see e.g. Lemma 1 in \citet{sze07}], 
	$$
	\int_{\mathbb{R}^d}\frac{1-\cos(u^\top x)}{c_d |u|_d^{d+1}}du=|x|_d,\ \ \mbox{for all }x\in {\mathbb{R}^d}.
	$$
	
	\
	
	\noindent (c) By conditioning on $Y$ and $Y'$ we have
	{\small \begin{align*}
		{\mathbb E}[(Y-p)(Y'-p)|X_t - X'_t|_d] &=
		p^2 I_{00}(t) (1-p)^2 - p(1-p)I_{01}(t) 2p(1-p)\\
		\hspace{0.5cm}&\hspace{10pt}+(1-p)^2I_{11}(t)p^2
		=p^2(1-p)^2 (I_{00}(t)+I_{11}(t) - 2I_{01}(t)).
		\end{align*}}
	Now, using (3.2), $ \V^2(X_t,Y) = 4p^2(1-p)^2 \left[ I_{01}(t) - \frac{I_{00}(t)+I_{11}(t)}{2}\right]$.
\end{proof}

\begin{proof}[Theorem 2]
	Continuity of ${\cal V}_n^2(X_t,Y)$ is straightforward from DCT. It suffices to prove the result for sequences of samples $X_1^{(0)},\ldots,X_{n_0}^{(0)}$, and $X_1^{(1)},\ldots,X_{n_1}^{(1)}$, drawn from $X|Y=0$ and $X|Y=1$, respectively, such that $n_1/(n_0+n_1)\to p={\mathbb P}(Y=1)$.

	From the triangle inequality it is enough to prove the	 uniform convergence of $\hat I_{00}(t)$, $\hat I_{11}(t)$ and $\hat I_{01}(t)$ to $I_{00}(t)$, $I_{11}(t)$ and $I_{01}(t)$, respectively. For the first two quantities we apply Lemma \ref{lemma:Ustatistics}  to the kernel $k(x,x')=|x-x'|$. For the last one we apply Lemma \ref{lemma:twosampleUstatistics}  to the same kernel.
	Observe that  $\mathbb{E}\|X\|_\infty < \infty$ implies the moment condition of Lemma \ref{lemma:Ustatistics} whereas $\mathbb{E}( \|X\|_\infty\log^+\|X\|_\infty )< \infty$ implies the moment condition of Lemma \ref{lemma:twosampleUstatistics}.
	The last statement readily follows from the uniform convergence and the 
	compactness of $[0,1]^d$.
\end{proof}

\begin{proof}[Proposition 1]

	We know $g^*(x)={\mathbb I}_{\{\eta(x)>1/2\}}$.  Then, we use equation (4.1), which provides $\eta(x)$
	in terms of the Radon-Nikodym derivative $d\mu_0/d\mu_1$, and the expression for $d\mu_0/d\mu_1$ given in \citet{lip77}, p. 239. This gives
	\[
	\eta(x)=\left[\frac{1-p}{p}\sqrt{2}e^{-x_1^2/4}+1 \right]^{-1}.
	\]		
	
	Now, from $g^*(x)={\mathbb I}_{\{\eta(x)>1/2\}}$, we get  $g^*(x)=1$ if and only if
	$x_1^2 > 4\log\left( \frac{\sqrt{2}(1-p)}{p} \right)$.
\end{proof}
\begin{proof}[Proposition 2]
	
	Again, we use expression (4.1) to derive the expression of the optimal rule $g^*(x)={\mathbb I}_{\{\eta(x)>1/2\}}$. In this case the calculation is made possible using the expression of the Radon-Nikodym derivative for the distribution of a Brownian process with trend, $F(t)+B(t)$,
	with respect to that of a standard Brownian: 
	\begin{equation}
	\frac{d\mu_1}{d
		\mu_0}(B)=\exp\left\{-\frac{1}{2}\int_0^1F^\prime(s)^2ds+\int_0^1F^\prime dB\right\},\label{RNderivative}
	\end{equation}
	for $\mu_0$-almost all $B\in {\mathcal C}[0,1]$; 
	see, \citet{mor10}, Th. 1.38 and Remark 1.43, for further details.
	Observe that in this case 
	we have $F(t)=ct$. Thus, from (4.1), we finally get
	$
	\eta(x) = \left[\frac{1-p}{p}\exp\left(\frac{c^2}{2}-cx_1\right) +1 \right] ^{-1},
	$
	which again only depends on $x$ through $x(1)=x_1$. The result follows easily from  this expression.
\end{proof}
\begin{proof}[Proposition 3]
	
	In this case, the trend function is $F(t)=\Phi_{m,k}(t)$. So $F^{'}(t)=\varphi_{m,k}$ and $F^{''}(t)=0$.
	From equations (4.1) and \eqref{RNderivative},  we readily get (4.3) and (4.4).
\end{proof}

\begin{proof}[Proposition 4]
	Let us first consider the model in Proposition 1 (i.e., Brownian vs. Brownian with a stochastic trend). Such model entails that $X_t | Y=0 \sim  N(0,\sqrt{t})$ and $X_t | Y=1 \sim  N(0,\sqrt{t^2 + t})$. 
	Now, recall that if $\xi\sim N(m,\sigma)$, then,
	\begin{equation}\label{Evalorabsoluto}
	\EE\abs{\xi} =\sigma \sqrt{\frac{2}{\pi}} e^{- \frac{m^2}{\sigma^2}} +
	m \left( 2 \Phi \left(\frac{m}{\sigma}\right)-1\right),
	\end{equation}
	where $\Phi(z)$ denotes the distribution function of the standard normal.
	
	Now, using (3.3) and \eqref{Evalorabsoluto} we have the following expressions,
	$$I_{01}(t) = \EE |\sqrt{t}Z - \sqrt{t^2 + t}Z'|
	=
	\sqrt{\frac{2(t^2 + 2t)}{\pi}},$$
	$$ 
	I_{00}(t) = \EE |\sqrt{t}Z - \sqrt{t}Z'| = \sqrt{\frac{4t}{\pi}},$$
	$$I_{11}(t) = \EE |\sqrt{t^2 + t}Z - \sqrt{t^2 + t}Z'| = \sqrt{\frac{4(t^2+t)}{\pi}},$$
	where $Z$ and $Z^\prime$ are independent $N(0,1)$ random variables.
	
	Then, the function $\V^2(X_t,Y)=4p^2(1-p)^2 \left( I_{01}(t) - \frac{I_{00}(t)+I_{11}(t)}{2}\right)$ grows with $t$ so it is maximized at $t^*=1$, which is the only point that has an influence on
	the Bayes rule.
	
	Let us now consider the model in Proposition 2 (i.e., Brownian vs. Brownian with a linear trend). Again,
	from \eqref{Evalorabsoluto} we have in this case,
	\begin{align*}
	I_{01}(t) = \EE |ct + \sqrt{t}Z - \sqrt{t}Z'| =2\sqrt{\frac{t}{\pi}}e^{-\frac{c^2t}{2}} +ct\left(2\Phi\left(c\sqrt{\frac{t}{2}}\right) -1\right),
	\end{align*}
	$$I_{00}(t)=I_{11}(t) = \EE |\sqrt{t}Z - \sqrt{t}Z'| = \sqrt{\frac{4t}{\pi}},$$
	where $Z$ and $Z^\prime$ are iid standard Gaussian variables. Therefore using (3.3),
	$$\V^2(X_t,Y)= 
	C\left[2\sqrt{\frac{t}{\pi}}\left(e^{-\frac{c^2t}{2}} -1 \right)
	+ct\left(2\Phi\left(c\sqrt{\frac{t}{2}}\right) -1\right)\right],$$
	where $C=4p^2(1-p)^2$.
	We can check  numerically that this an increasing function which reaches its only
	maximum at $t^*=1$. According to Proposition 1 this is the only relevant point for the Bayes rule.
\end{proof}



\renewcommand\bibname{\large \bf References}


\end{document}